\documentclass[superscriptaddress,aps,twocolumn,longbibliography,nofootinbib,pra]{revtex4-2}
\usepackage{graphicx, color, graphpap}      
\usepackage{amsmath}
\usepackage{bibunits}
\defaultbibliography{refs.bib}
\defaultbibliographystyle{unsrt}
\usepackage{enumerate}
\usepackage{amssymb}
\usepackage{amsthm}
\usepackage{pstricks}
\usepackage{float}
\usepackage[colorlinks=true, citecolor=blue, linkcolor=red]{hyperref}
\usepackage[T1]{fontenc}
\usepackage{bbm}
\usepackage{dsfont}
\usepackage[linesnumbered,ruled,vlined]{algorithm2e}
\SetKwInput{kwInit}{Init}
\usepackage{mathtools}

\usepackage{tikz}
\usepackage{graphicx}
\usetikzlibrary{positioning}
\usepackage{graphicx}
\usepackage{xcolor}
\usepackage[caption=false]{subfig}
\usepackage{siunitx}
\usepackage{pifont} 
\usepackage{todonotes}
\usepackage{physics}
\usepackage{thmtools}
\usepackage{thm-restate}
\usepackage{quantikz}
\usetikzlibrary{calc}

\renewcommand{\ket}[1]{| #1 \rangle}

\newcommand{\ovl}[1]{\overline{#1}}
\newcommand{\Rcal}{\mathcal{R}}
\newcommand{\Gcal}{\mathcal{G}}
\newcommand{\Vr}[1]{\text{Var} \left( #1 \right)}

\renewcommand{\eqref}[1]{(\ref{#1})}

\newtheoremstyle{example}{\topsep}{\topsep}%
{}
{}
{\bfseries}
{:}
{   }
{\thmname{#1}\thmnumber{ #2}}
\theoremstyle{example}
\newtheorem{theorem}{Theorem}
\newtheorem{lemma}{Lemma}
\newtheorem{corollary}{Corollary}

\newtheorem{proposition}{Proposition}

\theoremstyle{definition}
\newcommand{\G}{\mathcal{G}}

\newtheorem{definition}{Definition}

\newtheorem*{theorem*}{Theorem}

\def\orcid#1{\kern -0.4em\href{https://orcid.org/#1}{\includegraphics[keepaspectratio,width=0.7em]{orcid_logo.pdf}}}

\usepackage{lipsum}

\long\def\ca#1\cb{} 
\begin{document}
\title{Overlapped groupings for quantum energy estimation:\texorpdfstring{\\}{ }Maximal variance reduction and deterministic algorithms for reducing variance}

\author{Jeremiah Rowland}
\affiliation{Department of Physics and Astronomy, Michigan State University, East Lansing, MI 48823, USA}
\affiliation{Center for Quantum Computing, Science, and Engineering, Michigan State University, East Lansing, MI 48823, USA}

\author{Rahul Sarkar}
\affiliation{Department of Mathematics, University of California, Berkeley}

\author{Nicolas PD Sawaya}
\affiliation{Azulene Labs, Berkeley, CA 94720}

\author{Norm M. Tubman}
\affiliation{Applied Physics Group, NASA Ames Research Center}

\author{Ryan LaRose} \thanks{Corresponding author:  \href{rmlarose@msu.edu}{rmlarose@msu.edu}}
\affiliation{Department of Computational Mathematics, Science, and Engineering, Michigan State University, East Lansing, MI 48823, USA}
\affiliation{Department of Electrical and Computer Engineering, Michigan State University, East Lansing, MI 48823, USA}
\affiliation{Department of Physics and Astronomy, Michigan State University, East Lansing, MI 48823, USA}
\affiliation{Center for Quantum Computing, Science, and Engineering, Michigan State University, East Lansing, MI 48823, USA}

\begin{abstract}
        Grouping-based measurement strategies are widely used to reduce measurement complexity in near-term quantum algorithms. While these schemes have typically produced disjoint groups, recently this has been relaxed in what is known as overlapped grouping or coefficient splitting where operators may appear in more than one compatible group. In recent work, it has been numerically shown that this strategy can reduce the variance of energy estimates on small benchmark problems, motivating both the application and further analysis of the method. Here we prove that overlapped grouping for energy estimation can lead to a maximal variance reduction that is linear in the number of Hamiltonian terms. We introduce a new algorithm which we call repacking to transform existing groups into overlapped groups, and we show this repacking procedure iteratively reduces variance under mild assumptions. We also perform numerical simulations with  Hamiltonians up to $44$ qubits and $575 \cdot 10^{3}$ terms, assessing overlapped grouping at scale on problems of practical importance. Our numerics show that the variance reduction relative to state-of-the-art (disjoint) grouping increases linearly with the problem size, suggesting that overlapped grouping methods can be a powerful strategy for quantum energy estimation at the scale of Megaquop computers and beyond.
\end{abstract}

\maketitle


\section{Introduction}
Measurement complexity is a dominant contributor to the cost of quantum algorithms, determining both the number of distinct circuits and the number of shots required per circuit to reach a target accuracy.  Particularly pronounced in near-term quantum algorithms, measurement complexity has been well-studied in variational quantum algorithms~\cite{peruzzoVariationalEigenvalueSolver2014, mccleanTheoryVariationalHybrid2016, weckerProgressPracticalQuantum2015, verteletskyiMeasurementOptimizationVariational2020} where a prominent proposal to mitigate measurement complexity is grouping-based schemes~\cite{verteletskyiMeasurementOptimizationVariational2020,zhaoMeasurementReductionVariational2020,crawfordEfficientQuantumMeasurement2021}. These schemes reduce measurement complexity by grouping Pauli operators into disjoint sets of mutually commuting terms which can then be measured simultaneously in a shared basis.  Grouping has been shown to be an effective near-term strategy for producing energy estimates for moderately-sized Hamiltonians such as those arising in chemistry and condensed matter physics~\cite{crawfordEfficientQuantumMeasurement2021}.  Other proposed strategies for reducing measurement complexity include randomized~\cite{huangPredictingManyProperties2020, hadfieldMeasurementsQuantumHamiltonians2022} or derandomized~\cite{huangEfficientEstimationPauli2021} measurements, efficiently computing $k$-RDMs~\cite{BonetMonroig_Babbush_OBrien_2020},  informationally complete measurements~\cite{Garcia_Rossi_Sokolov_Tacchino_Barkoutsos_Mazzola_Tavernelli_Maniscalco_2021}, machine learning based methods~\cite{Torlai_Mazzola_Carleo_Mezzacapo_2020,Quek_Fort_Ng_2021,Smith_Gray_Kim_2021}, as well as methods tailored to particular physical systems such as fermionic Hamiltonians~\cite{hugginsEfficientNoiseResilient2021,choiFluidFermionicFragments2023}.

While grouping-based schemes have typically considered disjoint groups,  this is not a fundamental constraint. As an example, for the grouping $\{\{ZI,ZZ\},\{XX\}\}$, the operator $ZZ$ is measured only in the first group, but $ZZ$ also commutes with $XX$. Thus, $ZZ$ is compatible with both groups, allowing one to extract additional information with the same number of shots. This idea has recently been explored under the names of overlapped grouping and coefficient splitting~\cite{wuOverlappedGroupingMeasurement2023,yenDeterministicImprovementsQuantum2023,greschGuaranteedEfficientEnergy2025}, and to our knowledge was first suggested in~\cite{crawfordEfficientQuantumMeasurement2021} under the name of coefficient splitting. In particular, Ref.~\cite{wuOverlappedGroupingMeasurement2023} used this idea in numerical benchmarks for molecular Hamiltonians with up to $16$ qubits and showed improvements relative to other measurement strategies. Ref.~\cite{yenDeterministicImprovementsQuantum2023} introduced two overlapped grouping strategies and showed reduced variance for molecular Hamiltonians including H$_2$, LiH, BeH$_2$, and NH$_3$ relative to previous methods. Last, Ref.~\cite{greschGuaranteedEfficientEnergy2025} combined the idea of overlapped groupings with shadow tomography to again show variance reduction of small benchmark molecules (up to $16$ qubits). A similar idea was used in~\cite{choiGhost2022} in which ``ghost'' operators appearing in multiple groups are introduced to decrease variance. Based on the initial numerical success of this approach, it is natural and important to ask about the maximal variance reduction possible for overlapped grouping methods, when overlapped groupings can reduce measurement complexity, and how effective these methods will be for practical problems at the scale of Megaquop computers~\cite{Preskill_2025} and beyond.

In this work we provide answers to these questions. First, we consider the maximal variance reduction of overlapped grouping relative to (disjoint) grouping. This reduction of course depends on the original grouping. For this, we consider the widely-used sorted insertion~\cite{crawfordEfficientQuantumMeasurement2021} heuristic for grouping, and construct an explicit Hamiltonian where the overlapped grouping reduction in variance is proportional to the number of groups in the original disjoint grouping. This reduction matches the maximum possible asymptotic variance reduction arising from overlapping groupings under a fixed allocation strategy relative to the original strategy. Our only assumption in this proof is that covariance terms vanish, an assumption that has been employed when constructing heuristic estimators for (overlapped) grouping variance~\cite{wuOverlappedGroupingMeasurement2023,yenDeterministicImprovementsQuantum2023}. In our discussion, we present examples where this estimator is accurate, as well as counterexamples. In our proof, we introduce a new algorithm to produce overlapped groupings from (disjoint) groupings which we call \textit{repacking}. 
Under mild assumptions on the Hamiltonian, we show that repacking always reduces the variance of grouping-based methods. Thus, repacking provides a deterministic way to reduce variance relative to an initial grouping when estimating energy. 
Finally, we augment these analytical results with numerical results on overlapped groupings for practical problems. Notably, we perform numerics with electronic structure Hamiltonians that have up to $44$ qubits and $575 \cdot 10^3$ terms, significantly beyond what has been done numerically in previously-mentioned works. Relative to sorted insertion, we corroborate our analytical result that overlapped grouping reduces variance in all of our numerical experiments, up to a factor of ${\sim}2.35$x for our largest molecular benchmarks, and empirically find that the variance reduction increases in the size of the problem. In addition to our analytical results, this suggests that overlapped grouping methods can provide significant savings for quantum energy estimation with practical problems at scale.

The rest of the paper is organized as follows. We first present a summary our main results in Sec.~\ref{sec:main-results}. In Sec.~\ref{sec:definitions}, we review preliminary definitions and prior work. In Sec.~\ref{sec:methods}, we introduce our new methods including empirical estimators for overlapped groupings (Sec.~\ref{sec:empirical-estimators-for-overlapped-groupings}), optimal shot allocation for overlapped groupings (Sec.~\ref{sec:optimal-shot-allocation}), and repacking algorithms to form overlapped groupings from (disjoint) groupings (Sec.~\ref{sec:methods-repacking}). As we describe, repacking can be done without changing measurement bases, which we call a \textit{post hoc repacking}, and thus can be implemented with samples obtained from previous experiments. Or, repacking can be done by changing measurement bases prior to experimental sampling, which we call \textit{ad hoc repacking}. In Sec.~\ref{sec:results} we formally state and prove our analytical results (Sec.~\ref{subsec:analyticalResults}) and present our numerical results (Sec.~\ref{sec:numerical-results}). 

\section{Summary of Main Results} \label{sec:main-results}

\begin{figure}
    \centering
    \includegraphics[width=\columnwidth]{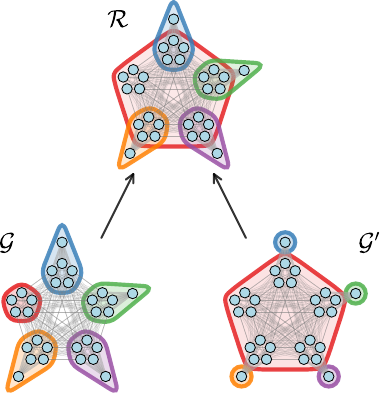}
    \caption{
        Illustration of the family of Hamiltonians which admit the maximal variance reduction of overlapped groupings relative to disjoint groupings, as stated in Theorem~\ref{thm:maximal-variance-reduction}. As defined in Sec.~\ref{sec:definitions}, nodes represent Hamiltonian terms, and terms that commute share an edge. Two (disjoint) groupings, $\G$ and $\G'$, as well as the overlapped grouping $\Rcal$ found by repacking, are shown for the same Hamiltonian (graph). This Hamiltonian has $n$ complete graphs $K_n$ which are interconnected, as well as $n - 1$ ``exterior nodes'' interconnected with the nearest complete graph. 
    }
    \label{fig:partial_order}
\end{figure}

Here we summarize the primary contributions of our work. First, we address the key question: What is the best case improvement for overlapped grouping methods relative to (disjoint) grouping methods? It is clear that this depends on the particular grouping method chosen, as one could artificially construct a bad grouping (for example, the trivial grouping in which every Hamiltonian term is placed into its own group), for which the improvement from overlapped grouping would be artificially large. To avoid this, we consider the performant sorted insertion~\cite{crawfordEfficientQuantumMeasurement2021} heuristic algorithm to form a disjoint grouping. Sorted insertion works by sorting Hamiltonian terms by coefficient weight and then placing terms into the first compatible group, forming a new group if there are no compatible groups. This method has proven performant on electronic structure Hamiltonians and a variety of other benchmarks~\cite{crawfordEfficientQuantumMeasurement2021,dalfaveroMeasurementReductionExpectation2025}. We show that there exists a family of Hamiltonians for which overlapped groupings outperform disjoint groupings found by sorted insertion by a linear factor in the number of groups.

\begin{theorem} \label{thm:maximal-variance-reduction}
There exist Hamiltonians and states for which the variance reduction of overlapped grouping relative to the best disjoint grouping found by sorted insertion is $\Theta(m)$, where $m$ is the number of disjoint groups. Assuming a model of state-independent variance and zero covariance, this is the asymptotic maximal variance reduction that can be achieved by overlapped grouping. 
\end{theorem}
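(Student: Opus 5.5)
The proof has two parts: an explicit lower-bound construction, and a matching upper bound valid under the stated variance model.

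\textbf{Variance model.} The Hamiltonian is $H=\sum_j c_j P_j$. A disjoint grouping $\G=\{G_1,\dots,G_m\}$ uses the fixed allocation $N_i \propto \sum_{j\in G_i}|c_j|$. The state-independent, zero-covariance model assumes $\mathrm{Var}(P_j)=1$ and drops cross terms. Under these assumptions the estimator variance becomes
\begin{equation}
\mathrm{Var}_{\G} = \sum_i \frac{\sum_{j\in G_i} c_j^2}{N_i}.
\end{equation}
For an overlapped grouping, each term $j$ is measured in every compatible group. Averaging optimally over those groups gives a per-term variance of $c_j^2/\sum_{i\ni j}N_i$. The ratio $\mathrm{Var}_{\G}/\mathrm{Var}_{\Rcal}$, at equal total shot budget $N$, is the quantity to bound.

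\textbf{Upper bound.} Fix the allocation $N_i$ of the overlapped grouping to be the same as that of the original grouping, as the statement says ``under a fixed allocation strategy''. Term $j$ lies in its original group $G_{g(j)}$. Its variance therefore satisfies
\begin{equation}
\frac{c_j^2}{N} \le \frac{c_j^2}{\sum_{i\ni j}N_i} \le \frac{c_j^2}{N_{g(j)}},
\end{equation}
since $\sum_{i\ni j}N_i$ lies between $N_{g(j)}$ and $N$. Summing over $j$ gives $\mathrm{Var}_{\Rcal}\ge \sum_j c_j^2/N$. The original variance satisfies $\mathrm{Var}_{\G}\le \sum_j c_j^2/\min_i N_i$. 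This ratio is not automatically $O(m)$, so one more step is needed. I would compute per group: group $i$ contributes at most $\sum_{j\in G_i}c_j^2/N_i$ to $\mathrm{Var}_{\G}$ and at least $\sum_{j\in G_i}c_j^2/N$ to $\mathrm{Var}_{\Rcal}$. The ratio is then $\max_i N/N_i$. For uniform or balanced allocations this is $O(m)$. For general weighted allocations I would instead use the bound $\sum_i \frac{a_i}{N_i}\le m\cdot\max_i\frac{a_i}{N_i}$ together with the structure of the weight-proportional rule. This is where I expect the bookkeeping to be delicate: one must make the allocation model precise enough that the ratio is bounded by $m$, perhaps by defining the reduction relative to the same per-group shot counts.

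\textbf{Lower bound construction.} I would use the Hamiltonian family of Fig.~\ref{fig:partial_order}. It consists of $n$ cliques $K_n$ with all cross-clique edges present except a designed set, plus $n-1$ exterior nodes, each commuting only with its nearest clique. Coefficients are chosen so that sorted insertion first packs the large-weight clique terms, and greedily merges them into a few groups. The small exterior terms then cannot join those groups and are forced into $m=\Theta(n)$ separate groups. Two properties of the construction must then be verified.
\begin{enumerate}[(i)]
\item \textbf{Best sorted insertion grouping.} Every grouping sorted insertion can produce has variance $\Omega(m)\cdot V^*$ for a suitable target $V^*$. Tie-breaking can give different groupings (the figure shows $\G$ and $\G'$), so I would consider all tie-breaking orders. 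In each case the allocation gives the heavy clique terms only a $1/m$ fraction of the shots.
\item \textbf{Repacked grouping.} After repacking into $\Rcal$, every heavy clique term appears in $\Theta(m)$ groups, one for each exterior group it commutes with. Its effective shot count is then $\Theta(N)$, so $\mathrm{Var}_{\Rcal}=O(V^*)$.
\end{enumerate}

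The main obstacle is the lower bound. The difficulty is choosing coefficients and commutation edges that are realizable by actual Pauli strings (e.g.\ by assigning disjoint qubit registers to the cliques), and that force sorted insertion into the bad grouping for every tie-breaking order. I would first try making the heavy terms dominant in coefficient, with the exterior terms carrying most of the allocated shots via their count. Then I would check both variances directly by computing the sums above.
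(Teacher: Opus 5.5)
\textbf{Upper bound.} Your allocation $N_i\propto\sum_{j\in G_i}|c_j|$, combined with the per-group ratio $\max_i N/N_i$, cannot give an $O(m)$ bound. In fact the claim is false in that model. Take $H=P_0+2\delta P_1+\delta\sum_{k=1}^{K}Q_k$ with $\delta=K^{-1/2}$. Let $P_0$ commute with everything, let $P_1$ anticommute with every $Q_k$, and let the $Q_k$ commute with each other (e.g.\ $P_0=Z_1$, $P_1=X_2$, $Q_k=Z_2\otimes(\text{distinct }Z\text{-strings})$).
\begin{itemize}
\item Sorted insertion returns $\{\{P_0,P_1\},\{Q_1,\dots,Q_K\}\}$, so $m=2$.
\item The $\ell_1$ rule gives $N_1\approx N/\sqrt{K}$, so $\mathrm{Var}_{\mathcal G}\approx\sqrt{K}/N$.
\item Repacking $P_0$ into the second group, with the same $N_1,N_2$, gives $\mathrm{Var}_{\mathcal R}\approx 2/N$.
\end{itemize}
That is a $\Theta(\sqrt K)$ reduction at fixed $m=2$.

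The missing idea is to compare against the disjoint grouping under the allocation that is optimal in the zero-covariance model, $N_i\propto\sqrt{S_i}$ with $S_i=\sum_{j\in G_i}c_j^2$. You must also argue globally rather than group by group, since $\max_i N/N_i$ is unbounded under this rule too. Concretely, $\mathrm{Var}^*_{\mathcal G}=(\sum_{i=1}^m\sqrt{S_i})^2/N\le m\|c\|_2^2/N$ by Cauchy--Schwarz. Combined with your correct, allocation-independent bound $\mathrm{Var}_{\mathcal R}\ge\|c\|_2^2/N$, this gives a ratio of at most $m$. This is exactly the paper's concavity argument.

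\textbf{Lower bound.} The mechanism you describe does not fit the graph of Fig.~\ref{fig:partial_order}. Each exterior node commutes only with its own clique, so a clique term can share a group with at most one exterior node. Claim (ii), that it appears in $\Theta(m)$ groups, is therefore impossible; in the paper's $\mathcal R$ every term lies in at most two groups.

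Moreover, the grouping you want sorted insertion to produce (heavy clique terms packed first into a few groups, light exterior nodes as singletons) is essentially the \emph{good} grouping $\mathcal G'$. Any weight-based allocation already gives the heavy group(s) at least a constant fraction of all shots. The $n-1$ exterior terms cannot ``carry most of the shots'' against $n^2$ heavier clique terms, so no $\Theta(m)$ gap appears.

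The paper does the opposite:
\begin{itemize}
\item All coefficients are $|c_i|=1+\epsilon$ with $|\epsilon|\le 1/N$, the exterior terms slightly heavier.
\item Sorted insertion therefore seeds $L-1$ separate groups with the exterior terms, and each clique joins its own exterior node's group.
\item This fragments the $L^2$-term commuting set into $m=L$ groups, giving $\mathrm{Var}^*(\mathcal G)=\Theta(L^3/M)$ versus $\Theta(L^2/M)$ for $\mathcal G'$.
\item The reduction then follows from the common repacking $\mathcal R$ (Lemma~\ref{lem:common-repacking}) and monotonicity (Theorem~\ref{thm:repacking-never-increases-variance-with-covariance}), applied with $\mathcal G'$'s shot counts.
\end{itemize}
So the gain comes from re-allocating $\Theta(M)$ shots onto the single big group of $\mathcal R$, not from any term appearing in many groups. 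Under your convention of keeping $\mathcal G$'s shot counts fixed (about $M/L$ per group), this $\mathcal R$ improves only by a factor of about $2$.

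Finally, requiring a bad outcome for every tie-breaking order in (i) is unnecessary, and false here: $\mathcal G'$ is itself a sorted-insertion output when clique terms come first. The coefficient perturbation is what fixes the order.
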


The commutativity structure of the family of Hamiltonians which admit this variance reduction is shown in Fig.~\ref{fig:partial_order}. These Hamiltonians consist of $n$ complete graphs $K_n$, with each node sharing an edge to every other node in a complete graph. (As defined in Sec.~\ref{sec:definitions}, we use the usual representation where each node is a Hamiltonian term, and commuting terms have an edge between them.) Additionally, all but one complete graphs have an ``exterior node'' with edges to all nodes in the nearest complete graph. The bottom panel shows two disjoint groupings $\G$ and $\G'$ found by sorted insertion, while the top panel shows the optimal overlapped grouping $\Rcal$. Intuitively, overlapped grouping allows for all exterior nodes to be grouped with their nearest complete graph (as in $\G$) \textit{and} for all complete graphs to be grouped (as in $\G'$), which is not possible with a disjoint grouping. The formal statement and proof of Theorem~\ref{thm:maximal-variance-reduction} is given in Sec.~\ref{subsec:analyticalResults}. We emphasize from this formal statement that Hamiltonian coefficients are all chosen to be $O(1)$ to purposefully find a challenging case for sorted insertion so that we may characterize the maximal variance reduction we can hope to obtain by overlapped groupings. In addition to this model, we numerically explore molecular Hamiltonians and certain models of random Hamiltonians to characterize the average-case improvement from overlapped groupings, as discussed later in this section.

In Fig.~\ref{fig:partial_order} and the above discussion, we use $\Rcal$ for the overlapped grouping to emphasize it is obtained through a particular algorithm we introduce which we call repacking. While there can be many strategies to find an overlapped grouping, repacking starts from a given disjoint grouping, then seeks to add terms to additional groups in a specified manner. We introduce two such algorithms, {post-hoc repacking} and {ad-hoc repacking}. These repacking algorithms have the key property that they only add operators to existing measurement groups and do not form any new groups.  
Under mild conditions on the Hamiltonian, and for a suitable variance estimator used in prior work that we employ here, we show that repacking algorithms strictly decrease variance. 

\begin{theorem} \label{thm:repacking-reduces-variance}
Let $\mathcal{G}$ be a grouping of a Hamiltonian $H$ and let $\mathcal{R}$ be a repacking of $\mathcal{G}$.  Assume that for all distinct Pauli operators $P_i, P_k$ in $\mathrm{supp}(H)$, $\sigma_{P_i P_k} = 0.$
Fix a shot allocation $\{M_j\}_{j=1}^m$ over the groups such that group $G^{[j]}$ and its corresponding repacked group $G'^{[j]}$ are measured $M_j$ times.  Using a shot-weighted averaging for the energy estimators $\ovl{E}_{\Gcal}$ and $\ovl{E}_{\Rcal}$, we have that 
\begin{equation}
    \Vr{\ovl{E}_{\Rcal}} < \Vr{\ovl{E}_{\Gcal}}
\end{equation}
if at least one refinement step adds a Pauli operator $P_s$ with $\sigma^2_{P_s} > 0$.
\end{theorem}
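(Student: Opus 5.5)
The plan is to write both estimators explicitly as weighted sums over Pauli terms, so that the zero-covariance hypothesis makes each variance a sum of per-term contributions. Then I will compare the two sums term by term.

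Write $H=\sum_i c_i P_i$. Under grouping $\Gcal$, each $P_i$ lies in exactly one group $G^{[j(i)]}$ and is estimated from the $M_{j(i)}$ shots of that group. Under the repacking $\Rcal$, $P_i$ lies in a set of groups $S_i \supseteq \{j(i)\}$. The shot-weighted averaging estimator pools all shots from the groups in $S_i$:
\begin{equation}
\ovl{P}_i^{\Rcal} = \frac{1}{\sum_{j\in S_i} M_j}\sum_{j\in S_i}\sum_{s=1}^{M_j} p_i^{(j,s)} ,
\end{equation}
where $p_i^{(j,s)}$ is the single-shot outcome of $P_i$ in group $j$. Single-shot outcomes from different shots are independent, and each has variance $\sigma^2_{P_i}$, which is the state-dependent but basis-independent variance $1-\langle P_i\rangle^2$. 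It follows that
\begin{equation}
\Vr{\ovl{P}_i^{\Rcal}} = \frac{\sigma^2_{P_i}}{\sum_{j\in S_i}M_j}, \qquad \Vr{\ovl{P}_i^{\Gcal}} = \frac{\sigma^2_{P_i}}{M_{j(i)}} .
\end{equation}

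The first key step is to invoke the hypothesis $\sigma_{P_iP_k}=0$ for distinct terms. This kills all cross terms in $\Vr{\sum_i c_i \ovl{P}_i}$, both within a single shot and across pooled groups. It also handles the cross-group pairs: shots from different groups are independent anyway, and same-shot pairs have zero covariance by assumption. Hence
\begin{equation}
\Vr{\ovl{E}_{\Rcal}}=\sum_i \frac{c_i^2\sigma^2_{P_i}}{\sum_{j\in S_i}M_j}, \qquad \Vr{\ovl{E}_{\Gcal}}=\sum_i \frac{c_i^2\sigma^2_{P_i}}{M_{j(i)}} .
\end{equation}

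The second step is the term-by-term comparison. Since $S_i\supseteq\{j(i)\}$ and all $M_j\ge 1$, each summand of the repacked variance is at most the corresponding disjoint one, which gives $\le$ overall. For strictness, take the refinement step that adds $P_s$ with $\sigma^2_{P_s}>0$ to a new group $j'\neq j(s)$. Then $\sum_{j\in S_s}M_j \ge M_{j(s)}+M_{j'} > M_{j(s)}$. This requires $c_s\neq 0$, which holds because $P_s\in\mathrm{supp}(H)$, and $M_{j'}>0$, which holds because each group is measured under the fixed allocation. So the $s$-th summand strictly decreases and the strict inequality follows.

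I expect the main obstacle to be the bookkeeping of what the repacking algorithms actually guarantee, rather than the algebra. I need that repacking never removes a term from its original group, so that $S_i\supseteq\{j(i)\}$. I also need that the estimator really is the pooled shot-weighted average above and not some other reweighting. If the paper's estimator instead averages the per-group means with weights proportional to $M_j$, the two forms coincide, and I would verify this first. A second subtlety is ad hoc repacking, which changes measurement bases. There I must argue that a term's single-shot variance $\sigma^2_{P_i}$ is unchanged by measuring it in a different compatible basis. This holds because each single-shot outcome is still a $\pm1$ eigenvalue sample of $P_i$ in the same state. The zero-covariance hypothesis is what keeps the newly co-measured terms from introducing covariance.
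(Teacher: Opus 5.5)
Your proof is correct. It reaches the result by a somewhat different route from the paper.

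\textbf{The paper's route.} It first proves a one-step lemma (Lemma~\ref{lem:variance-change-one-step}). For a refinement that adds a single Pauli $P_s$ to one group $\ell$, it computes the exact change in variance under shot-weighted averaging, \emph{including} covariance terms, and obtains a necessary and sufficient condition for a decrease. Setting all $\sigma_{P_iP_s}=0$ leaves only the term $c_s^2\sigma^2_{P_s}M_\ell/\bigl(\alpha_s(\alpha_s+M_\ell)\bigr)\ge 0$ (Corollary~\ref{cor:monotonicity-shot-weighted}). The theorem then follows by chaining such single-operator refinements from $\Gcal$ to $\Rcal$.

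\textbf{Your route.} You write the zero-covariance variance of both estimators in closed form and compare summand by summand, using $\Gamma_{\Gcal}(i)\subseteq\Gamma_{\Rcal}(i)$. Your closed form is exactly \eqref{eq:var-formula-repack-shot-wt} with the cross terms removed.

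\textbf{Comparison.} Under zero covariance both arguments rest on the same fact: each summand $c_i^2\sigma^2_{P_i}/\alpha_i$ is decreasing in the pooled shot count $\alpha_i$. Indeed, the paper's one-step difference is just $c_s^2\sigma^2_{P_s}/\alpha_s - c_s^2\sigma^2_{P_s}/(\alpha_s+M_\ell)$. Your version is more direct. It needs no decomposition of $\Rcal$ into a sequence of single additions, which the paper leaves implicit. It also makes strictness transparent, since one summand with $c_s\neq 0$, $\sigma^2_{P_s}>0$ and an extra group $j'\neq j(s)$ suffices. What the paper's route buys is the covariance-aware criterion \eqref{eq:variance-change-one-step-2}, which tells you when an individual repacking step helps even without the zero-covariance hypothesis.

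\textbf{Your two flagged concerns.} Both are settled in the paper's setup. First, the weights $w_{i,j}=M_j/\sum_{k\in\Gamma(i)}M_k$ applied to per-group means give exactly your pooled average. Second, the single-shot variance and covariance of commuting Paulis do not depend on which common eigenbasis is used, so ad-hoc basis changes are harmless.
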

Intuitively, this result says that overlapped groupings found through our repacking algorithms will always decrease variance, and so reduce the total number of measurements required to estimate energy to a given accuracy. This behavior may not be true with other methods to find overlapped groupings, especially when compared to optimal disjoint groupings found through well-performing heuristics like sorted insertion. Our repacking algorithms thus present a promising strategy to reduce measurement complexity. The proof of Theorem~\ref{thm:repacking-reduces-variance} is given in Sec.~\ref{subsec:analyticalResults}.

Finally, we perform large-scale numerics with Hamiltonians that have up to $44$ qubits and $575 \cdot 10^3$ (Pauli) terms, significantly beyond what has been done numerically in previous works. These results, shown in Sec.~\ref{sec:numerical-results}, characterize the average-case performance of overlapped grouping methods. We find that repacking reduces variance up to ${\sim}2.35$x for molecular Hamiltonians, with a larger reduction generally occurring for larger problem sizes. We also study random Hamiltonian models and find that the repacking variance reduction grows with the problem size, approximately linearly. These results suggest that overlapped grouping methods, especially those found by repacking, can provide a substantial reduction to measurement complexity on problem sizes suitable for Megaquop computers~\cite{Preskill_2025} --- concretely, for energy estimation problems defined on $10^2$ --- $10^3$ qubits.

\section{Preliminaries:\texorpdfstring{\\}{ }Notation and Definitions} \label{sec:definitions}

We express an $n$-qubit Hamiltonian in the Pauli basis 
\begin{equation}
\label{eq:hamiltonian-def}
H = \sum_{i = 1}^{N} c_i P_i,
\qquad
P_i = \bigotimes_{j=1}^n p_j,\; p_j \in \{I, X, Y, Z\},
\end{equation}
and, given a state $|\psi\rangle$, write its exact energy as
\begin{equation}
    E \equiv E_{|\psi\rangle} = \langle H \rangle \equiv \langle \psi | H | \psi \rangle = \sum_i c_i \langle P_i \rangle.
\end{equation}
Typically we will simply write $E$ and $\langle H \rangle$, leaving the state $|\psi\rangle$ implicit.  
The Pauli support of $H$, $\text{supp}(H)$, is the set of Pauli operators with non-zero coefficients when $H$ is expressed in the Pauli basis, i.e. the Pauli operators  in~\eqref{eq:hamiltonian-def} for which $c_i \neq 0$.

\subsection{Empirical estimators}

Throughout this work, we distinguish between true expectation values $\langle P_i \rangle$ and empirical estimators obtained from a finite number of measurement shots.  The true expectation values are fixed quantities determined by the target state $|\psi\rangle$ via $\langle P_i \rangle = \langle \psi | P_i | \psi \rangle$, while empirical estimators are random variables arising from measurement outcomes.  We denote empirical estimators using an overbar. Given $M$ samples (bitstrings) from a state $|\psi\rangle$ collected from a measurement basis in which a Pauli operator $P$ is diagonal, its empirical estimator is given by
\begin{equation}
\overline{\langle P \rangle} = \frac{1}{M} \sum_{k = 1}^{M} x_k,
\end{equation}
where $x_k \in \{ \pm 1 \}$ is the $k$th measurement outcome. With $M$ samples per Pauli, the empirical estimator for energy is uniquely defined as
\begin{equation}
    \overline{E} = \overline{\langle H \rangle} = \sum_{i = 1}^{N} c_i \overline{\langle {P_i} \rangle}.
\end{equation}

Since $P^2=I$ for Pauli operators, their variance is
\begin{equation}
    \text{Var}(P_i) = \sigma_{P_i}^2 = 1 - \langle P_i \rangle ^2 .
\end{equation}
The corresponding variance of the empirical estimator $\overline{\langle P_i \rangle}$ with $M$ samples is
\begin{equation}
    \text{Var}\left(\overline{\langle P_i \rangle}\right)
    =
    \frac{\sigma_{P_i}^2}{M}.
\end{equation}
If two Pauli operators $P_i$ and $P_j$ are measured simultaneously, their covariance is
\begin{equation}
\text{Cov}(P_i, P_j)
=
\sigma_{P_i P_j}
=
\langle P_i P_j \rangle
-
\langle P_i \rangle \langle P_j \rangle.
\end{equation}
If $P_i$ and $P_j$ are measured simultaneously $M$ times, the covariance of their empirical estimators is
\begin{equation}
\text{Cov}\left(
\overline{\langle P_i \rangle},
\overline{\langle P_j \rangle}
\right)
=
\frac{\sigma_{P_i P_j}}{M}.
\end{equation}
The variance of H is then given by

\begin{equation} \label{eqn:variance-exact}
\text{Var}(H) =
\sum_i c_i^2\, \text{Var}(P_i)
+\, 2 \sum_{i<k} c_i c_k\,
\text{Cov}\!\left(
    P_i,P_k
\right).
\end{equation}

We remark that although all examples presented in this work use the Pauli basis, none of our methods rely on structure unique to the Pauli operators.  Indeed, our algorithms for forming overlapped groupings (Sec.~\ref{sec:methods-repacking}) require only that the composite observable be decomposable into groups of jointly measurable operators.  For concreteness, and because this setting encompasses many near-term quantum simulation tasks, we specialize the notation and adopt the corresponding Pauli-specific estimators.

\subsection{Measurement complexity}

The goal of quantum energy estimation is to estimate the energy to a specified accuracy $\epsilon > 0$. Summarizing from~\cite{crawfordEfficientQuantumMeasurement2021}, the measurement complexity (total number of measurements) required for accuracy $\epsilon$ with $m$ groups is
\begin{equation}
    \label{eqn:measurements-for-accuracy-epsilon-with-variance}
    M = \left( \frac{1}{\epsilon} \sum_{i = 1}^{m} \sqrt{\text{Var}(O_i)} \right)^2
\end{equation}
where $O_i$ is the operator corresponding to the $i$th group
\begin{equation}
    O_i := \sum_{j} c^{[i]}_j P^{[i]}_j .
\end{equation}
This assumes that measurements are optimally distributed --- i.e., given a fixed measurement budget, the number of measurements per group is allocated so as to optimally reduce variance. Throughout the paper, we use ``reduce variance'' and ``reduce measurement complexity'' synonymously in light of~\eqref{eqn:measurements-for-accuracy-epsilon-with-variance}.

\subsection{Groupings and overlapped groupings} \label{sec:groupings-and-overlapped-groupings}
Grouping-based measurement procedures estimate a composite observable $H$ by partitioning operators into mutually commuting sets or groups.

\begin{definition}[Grouping] \label{def:grouping}
    A \textbf{grouping} $\G$ of a Hamiltonian $H = \sum_{i = 1}^{N} c_i P_i$ is a collection of disjoint sets of commuting Pauli operators in the Pauli support of $H$, whose union covers all Pauli operators in $H$. That is, $\G$ can be written
    \begin{equation} \label{eqn:grouping-definition}
        \G = \left\{ G^{[1]}, ..., G^{[m]} \right\}, \quad G^{[i]} = \left\{
            P^{[i]}_1, P^{[i]}_2, ..., P^{[i]}_{N_i} \right\}
    \end{equation}
    where $ G^{[i]} \cap G^{[j]} = \emptyset$ for all $i \neq j$ (disjoint), $\bigcup_{i = 1}^{m} G^{[i]} = \{ P_i \}_{i = 1}^N$ (covering), and $[P^{[i]}_j, P^{[i]}_k] = 0$ for all $j, k = 1, ..., N_i$ (commuting). We refer to each $G^{[i]}$ as a  \textbf{group}, and say the grouping has $m$ groups.
\end{definition}

Once a grouping $\G$ is obtained, the experiment proceeds by measuring the terms in each group by rotating into their common eigenbasis. The variance associated with a grouping arises from the estimator used to combine measurement outcomes across groups, and depends on both the shot allocation and the covariance structure of simultaneously measured observables. We make this dependence explicit in Sec.~\ref{sec:methods}, where we define the estimator and derive its variance.

A standard technique for finding a grouping formulates the problem in terms of the commutativity graph $G = (V, E)$ where each vertex corresponds to a Pauli operator $P_i$ and edges connect pairs of operators that commute.  A group is any clique in this graph, representing a set of operators in $H$ which are simultaneously diagonalizable. The problem of finding a grouping is equivalent to that of finding a disjoint clique covering of $G$ or a coloring of the complement graph $G^c$~\cite{verteletskyiMeasurementOptimizationVariational2020}. Finding a minimum-variance or minimum-size disjoint clique covering is NP-Hard in general~\cite{gareyComputersIntractabilityGuide1990}.  Practical methods therefore rely on greedy heuristics.  In the standard greedy algorithm, operators are processed in a fixed order and inserted into the first compatible existing group. If no existing group is compatible, a new group is created. The commonly used sorted insertion heuristic orders operators by decreasing coefficient magnitude, which typically reduces the worst-case variance relative to a random permutation~\cite{crawfordEfficientQuantumMeasurement2021}.  Although this ordering performs well in practice, it is not globally optimal and can exhibit poor behavior on pathological instances.  The classical runtime of this algorithm is $O(|V|^2)$, corresponding to the worst-case number of pairwise commutativity checks.

An overlapped grouping simply relaxes the condition that groups are disjoint.

\begin{definition}[Overlapped grouping] \label{def:overlapped-grouping}
    An \textbf{overlapped grouping} $\G$ of a Hamiltonian $H = \sum_{i = 1}^{N} c_i P_i$ is a list of (not necessarily disjoint) sets of commuting Pauli operators in the Pauli support of $H$, whose union covers all Pauli operators in $H$. That is, $\G$ can be written
    \begin{equation} \label{eqn:overlapped-grouping-definition}
        \G = \left\{ G^{[1]}, ..., G^{[m]} \right\}, \quad G^{[i]} = \left\{
            P^{[i]}_1, P^{[i]}_2, ..., P^{[i]}_{N_i} \right\}
    \end{equation}
    where $\bigcup_{i = 1}^{m} G^{[i]} = \{ P_i \}_{i = 1}^N$ (covering) and $[P^{[i]}_j, P^{[i]}_k] = 0$ for all $j, k = 1, ..., N_i$ (commuting). 
\end{definition}

\noindent As discussed, in an overlapped grouping the groups $G^{[i]}$ need not be disjoint, unlike in standard groupings. The idea behind this is that including Hamiltonian terms in multiple groups increases the effective number of measurement shots contributing to each term and can reduce the number of measurements required to estimate $\langle H \rangle$ to a desired accuracy, or equivalently reduce the variance of $\langle H \rangle$ given a fixed measurement budget.

We remark that more fine-grained measures of commutativity have recently been introduced to reduce measurement complexity in grouping-based algorithms~\cite{dalfaveroMeasurementReductionExpectation2025}. While we focus on (full) commutativity in this work, our methods for finding groupings and overlapped groupings can be used with these notions of commutativity as well.

\section{Methods} \label{sec:methods}

\subsection{Empirical estimators for overlapped groupings} \label{sec:empirical-estimators-for-overlapped-groupings}
Given an overlapped grouping $\mathcal{G}$, expectation values of Pauli operators can be estimated by combining the empirical estimates obtained from each group in which a Pauli appears.  This defines a family of estimators parameterized by weights $w$
\begin{equation}
    \label{eq:generalPauliEstimator}
    \overline{\langle P_i \rangle}_\mathcal{G}(w) = \sum_{j \in \Gamma_\mathcal{G}(i)} w_{i,j}\, \overline{\langle P_i \rangle}_{(j)},
\end{equation}
where $\Gamma_\mathcal{G}(i) \coloneq \left\{j:P_i \in G^{[j]}\right\}$ and the weights satisfy the unbiasedness constraint
\begin{equation}
    \label{eqn:unbiasedness-constraint-energy-estimator-weights}
    \sum_{j \in \Gamma_\mathcal{G}(i)} w_{i,j} = 1.
\end{equation}
This induces a corresponding family of energy estimators parameterized by the weights $w$
\begin{equation}
    \label{eq:generalEnergyEstimator}
    \overline{E}_\mathcal{G}(w)
    =
    \sum_i c_i \overline{\langle P_i \rangle}_\mathcal{G}(w).
\end{equation}
This flexibility in the choice of weights will play an important role in characterizing the effect of overlapped groupings on estimator variance.

In practice, we do not attempt to optimize the weights $w_{i,j}$ using covariance information.  Instead, we adopt a simple shot-weighted averaging rule which depends only on the shot allocation.  This mirrors the design philosophy of grouping-based methods: low classical cost and covariance-free grouping. Specifically, we define the heuristic estimator by choosing
\begin{equation} \label{eqn:heuristic-estimator-for-weights}
    w_{i,j} = \frac{M_j}{\sum_{k \in \Gamma_\mathcal{G}(i)} M_k},
\end{equation}
and denote the resulting Pauli and energy estimators as $\overline{\langle P_i \rangle}_\mathcal{G}$ and $\overline{E}_\mathcal{G}$.
This heuristic estimator minimizes variance when covariance is ignored by maximizing the effective shot count on each Pauli operator as is shown in Lemma~\ref{lem:oracle-reduces-to-averaging} in Sec.~\ref{subsec:analyticalResults}.  We remark that this  estimator is also derived in~\cite{wuOverlappedGroupingMeasurement2023} and re-expressed in a similar form in~\cite{yenDeterministicImprovementsQuantum2023}.  

Because different groups use disjoint sets of measurement shots, the empirical estimates from different groups are independent.  Therefore, the variance in the estimator for $\langle P_i \rangle$ is given by 
\begin{equation}
\text{Var}\!\left(\overline{\langle P_i \rangle}_{\mathcal{G}}(w)\right)
=
\sum_{j\in\Gamma_\mathcal{G}(i)} w_{i,j}^2 \,
\mathrm{Var}\big(\overline{\langle P_i \rangle}_{(j)}\big).
\end{equation} 
If two operators $P_i$ and $P_k$ appear together in at least one group, the estimator induces covariance
\begin{equation}
\begin{aligned}
\mathrm{Cov}\!\left(
\overline{\langle P_i\rangle}_{\mathcal{G}}(w),
\overline{\langle P_k\rangle}_{\mathcal{G}}(w)
\right)
&=\\
\sum_{j \in \Gamma_{\mathcal{G}}(i)\cap\Gamma_{\mathcal{G}}(k)}
&w_{i,j}\, w_{k,j}\,
\mathrm{Cov}\!\big(
\overline{\langle P_i \rangle}_{(j)},
\overline{\langle P_k \rangle}_{(j)}
\big).
\end{aligned}
\end{equation}
The total variance of the energy estimator is given by 
\begin{equation}
\label{eq:variance-expr}
\begin{aligned}
\mathrm{Var}\!\left(\overline{E}_\mathcal{G}(w)\right)
&= \sum_i c_i^2\, \mathrm{Var}\!\left(\overline{\langle P_i\rangle}_\mathcal{G}(w)\right) \\
&\quad + 2 \sum_{i<k} c_i c_k\,
\mathrm{Cov}\!\left(
\overline{\langle P_i\rangle}_\mathcal{G}(w),
\overline{\langle P_k\rangle}_\mathcal{G}(w)
\right).
\end{aligned}
\end{equation}

\subsection{Optimal shot allocation for overlapped groupings} \label{sec:optimal-shot-allocation}
Grouping-based measurement schemes are typically paired with a shot allocation strategy that distributes a fixed measurement budget across groups in order to minimize an upper bound on the estimator variance.  The most common approach in grouping-based methods is state-independent allocation, in which variances are bounded under worst case assumptions allowing covariance contributions to add constructively~\cite{weckerProgressPracticalQuantum2015,rubinApplicationFermionicMarginal2018}. Under these assumptions, the variance contribution of each group admits a simple worst-case bound, and the resulting allocation can be computed analytically at negligible classical cost.  This strategy is widely used in practice due to its robustness and independence from prior knowledge of the target state.  

For a disjoint grouping $\mathcal{G} = \{G^{[j]}\}$, the standard worst-case bound takes the form~\cite{weckerProgressPracticalQuantum2015,rubinApplicationFermionicMarginal2018}
\begin{equation}
    \text{Var}\!\left(\overline{E}\right) \leq \sum_j \frac{1}{M_j}\left(\sum_{i\in G^{[j]}} \vert c_i \vert \right)^2
\end{equation}
where $M_j$ is the number of shots allocated to group $G^{[j]}$.  Minimizing this bound subject to the constraint $\sum_j M_j = M_{tot}$ under worst-case bounds on covariance contributions yields a closed-form allocation strategy
\begin{equation}
    \label{eq: allocation}
    M_j \propto \sum_{i\in G^{[j]}} \vert c_i\vert.
\end{equation}
While this allocation strategy has been used, one can alternatively assume zero covariance and obtain
\begin{equation}
    \label{eq: zero-cov-allocation}
    M_j \propto \sqrt{\sum_{i\in G^{[j]}} c_i^2}.
\end{equation}
Throughout the theory and numerics for this work, we assume the latter allocation strategy where appropriate.

Overlapped groupings, in particular those found by repacking, modify the structure of the estimator by allowing individual operators to appear in multiple groups.  As a result, the effective variance depends not only on the allocation of shots across groups, but also on how those groups overlap.  Unlike disjoint groupings, we are not aware of a simple closed-form allocation rule for overlapped groupings.  Instead, the allocation problem naturally takes the form of a smooth constrained optimization over the group shot counts $\{M_j\}$,

\begin{equation}
    \label{eq: optallocation}
    \min_{\{M_j>0\}} \text{Var}(\overline{E}_\mathcal{G}) \quad \text{s.t. } \sum_j M_j = M_\text{tot}
\end{equation}
In the absence of state-specific information, this optimization can be carried out using the same worst-case variance bounds employed in disjoint grouping methods, yielding a state-independent allocation that is directly comparable to the strategy for grouping in~\eqref{eq: allocation}.  An alternative strategy motivated by the monotonicity result in Theorem~\ref{thm:repacking-never-increases-variance-with-covariance}, is to utilize the same allocation as implied by the initial grouping.  

Alternatively, the same framework admits state-dependent allocation, in which empirical estimates or analytical knowledge of variances and covariances are substituted into the variance model.  In this case, it becomes necessary to consider other optimization strategies due to poor scaling. Such methods are well described in prior work~\cite{yenDeterministicImprovementsQuantum2023, shlosbergAdaptiveEstimationQuantum2023}. This flexibility is not unique to overlapped groupings, as disjoint grouping schemes can also be tuned in a state-dependent manner. 

From this perspective, overlapped grouping does not fundamentally alter the role of shot allocation, but it does expand the design space over which allocation must be optimized.  Grouping heuristics determine which observables are estimable from each circuit, while allocation determines how measurement effort is distributed across those circuits.  These two choices are logically distinct but practically coupled, and overlapped grouping primarily acts by enlarging the set of estimators within a broader allocation problem.

\subsection{Repacking} \label{sec:methods-repacking}

Here we introduce two algorithms to map a (disjoint) grouping to an overlapped grouping. The algorithms produce overlapped groupings that do not form any additional groups, and only add operators to existing groups. We call such an overlapped grouping a repacked grouping, and we call a procedure which produces a repacked grouping a repacking, as defined below.

\begin{definition}[Repacked grouping, Repacking] \label{def:repacking}
Let $\G = \left\{ G^{[1]}, ..., G^{[m]} \right\}$ be a grouping.  A \textbf{repacked grouping} of $\G$ is an overlapped grouping $\mathcal{R} = \{G'^{[1]}, \ldots, G'^{[m]}\}$ such that $G^{[i]} \subseteq G'^{[i]}$ for all $i$. We refer to the algorithmic process of forming a repacked grouping from a (disjoint) grouping as \textbf{repacking}.
\end{definition}
We emphasize that the number of groups in the repacked grouping $\mathcal{R}$ is the same as the number of groups in the original (disjoint) grouping $\G$ --- that is, repacking only adds operators to existing groups, and does not form any new groups. Since $\mathcal{R}$ is a valid overlapped grouping, repacking can only add operators to groups where they are compatible.  It never deletes strings from groups, changes the number of groups (and as such, the number of circuits), or introduces incompatibility.

While these conditions are not necessary for an overlapped grouping, imposing them provides two important advantages.  First, since $G^{[i]} \subseteq G'^{[i]}$, the estimator associated with the original grouping is always contained in the enlarged estimator space. This induces a partial order on repackings, and enables a monotonic reduction in variance (Theorem~\ref{thm:repacking-never-increases-variance-with-covariance}).  Second, these conditions allow any grouping algorithm to be naturally extended to produce overlapped groupings without asymptotically increasing the runtime. Without these conditions, the problem of finding an overlapped grouping for a Hamiltonian has a (much) larger search space than finding a grouping. While in principle this enables finding a better solution, it has to be balanced with additional runtime required to do so.

Repacking has a simple interpretation on the commutativity graph $G = (V, E)$ induced by the Hamiltonian --- i.e., the graph where each node is a Pauli and edges are added between Paulis that commute. While (disjoint) groupings enforce a disjoint clique covering of $V$, repacking relaxes this condition to produce an overlapping clique cover that extends or coincides with the original covering. The underlying graph is unchanged, and in a natural greedy implementation each unordered pair of operators is tested for commutativity at most once, exactly as in sorted insertion.  As a result, the total number of compatibility checks remains $O(|V|^2)$, and repacking retains the same asymptotic classical runtime as the original grouping algorithm up to constant factors.  Of course, as the disjoint groupings are trivial repackings, and finding disjoint groupings is NP-hard, finding an optimal repacking is also NP-Hard.  

It is natural to consider iteratively reducing variance by finding successively better overlapped groupings, starting from an initial grouping. Indeed, our repacking algorithms in the next two subsections exhibit this behavior. It is therefore useful to define the notions of refinements and proper refinements below when analyzing these algorithms.

\begin{definition}
    Given a grouping $\G = \left\{ G^{[1]}, ..., G^{[m]} \right\}$ and a repacked grouping $\mathcal{R} = \{G'^{[1]}, \ldots, G'^{[m]}\}$, we say another repacked grouping $\Rcal' = \{G''^{[1]}, \dots, G''^{[m]}\}$ is a \textbf{refinement} of $\Rcal$ if $G'^{[j]} \subseteq G''^{[j]}$ for each $j=1,\dots,m$, and we say it is a \textbf{proper refinement} if additionally $G'^{[j]} \neq G''^{[j]}$ for at least one $j$. 
\end{definition}

We also define the following terminal condition of repacking in which no further refinement is possible.

\begin{definition}
    A \textbf{maximally repacked grouping} of $\Gcal$ is a repacked grouping that cannot give rise to a proper refinement.
\end{definition}

\begin{theorem}
\label{prop:maximal-repacking-non-uniqueness}
Let $\Gcal =  \{G^{[1]}, \ldots, G^{[m]}\}$ be a grouping, and $\Rcal =  \{G'^{[1]}, \ldots, G'^{[m]}\}$ be a repacked grouping of $\Gcal$. Then there exists a maximally repacked grouping of $\Gcal$, such that it is a refinement of $\Rcal$. Moreover, $\Gcal$ can have multiple distinct maximally repacked groupings.
\end{theorem}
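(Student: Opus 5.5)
Existence follows from finiteness plus a monotone process, and non-uniqueness from an explicit small example. For existence, I would start from the given repacked grouping $\Rcal$ and run the obvious greedy refinement. While $\Rcal$ admits a proper refinement, pick one and replace $\Rcal$ by it. Every repacked grouping of $\Gcal$ has $G'^{[j]} \subseteq \mathrm{supp}(H)$ for each $j$. So the quantity $\Phi(\Rcal) = \sum_{j=1}^m |G'^{[j]}|$ is bounded above by $mN$, and it strictly increases under a proper refinement. Hence the process terminates after at most $mN - \sum_j |G'^{[j]}|$ steps. The terminal object is a repacked grouping of $\Gcal$ that admits no proper refinement, which is exactly a maximally repacked grouping.

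Two routine checks complete the existence argument. First, refinement is transitive, since set inclusion is transitive componentwise, so the terminal grouping is a refinement of the original $\Rcal$. Second, each intermediate object really is a repacked grouping. It must contain $G^{[j]}$ in slot $j$, remain a commuting family, and cover $\mathrm{supp}(H)$; covering is automatic because the groups only grow. These properties hold by the definition of refinement, since a refinement is by definition another repacked grouping. Alternatively, I would phrase existence as taking an element of the finite, nonempty poset of repacked groupings refining $\Rcal$ (nonempty because it contains $\Rcal$) that is maximal under componentwise inclusion. Such a maximal element exists by finiteness. Any proper refinement of it would itself refine $\Rcal$ and contradict maximality, so it is maximally repacked.

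For non-uniqueness, I need one concrete grouping with two distinct maximal repackings. The natural idea is a single operator that commutes with the members of two different groups and can be added to either group but not both. For example, take $H$ on two qubits with support $\{ZI, XI, IZ\}$ and disjoint grouping $G^{[1]}=\{ZI\}$, $G^{[2]}=\{XI\}$, $G^{[3]}=\{IZ\}$. Here $IZ$ can be added to group 1 or to group 2, and can be added to both, since $IZ$ commutes with each. That would make the maximum unique, so this example fails, and the example must instead use a conflict between two candidate additions to the same group. I would take $G^{[1]}=\{II'\}$-like small example: support $\{ZZ, XX, ZI\}$ with groups $G^{[1]}=\{ZZ\}$, $G^{[2]}=\{XX\}$, $G^{[3]}=\{ZI\}$. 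Then $XX$ can join group 1, since $ZZ$ and $XX$ commute, and $ZI$ can join group 1, since it commutes with $ZZ$. However, $XX$ and $ZI$ anticommute, so group 1 can absorb only one of them. Group 2 can absorb $ZZ$ but not $ZI$, and group 3 can absorb $ZZ$ but not $XX$. The repacked grouping $\{\{ZZ,XX\},\{XX,ZZ\},\{ZI,ZZ\}\}$ is maximal, and so is $\{\{ZZ,ZI\},\{XX,ZZ\},\{ZI,ZZ\}\}$. I would verify maximality of each by checking every remaining operator against every group. These two are distinct, which proves the second claim.

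The main obstacle is only choosing an example in which the conflict is genuine, namely two compatible additions to one group that are mutually incompatible. The existence part is routine.
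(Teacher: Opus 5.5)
Your proof is correct, and it differs from the paper's mainly in the existence argument.

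\textbf{Existence.} You argue abstractly. Any sequence of proper refinements terminates, because $\Phi=\sum_j |G'^{[j]}|\le mN$ strictly increases at each step. Equivalently, the finite nonempty poset of repacked groupings refining $\Rcal$ has a maximal element, and by transitivity of refinement that element is a maximally repacked grouping refining $\Rcal$. The paper instead gives one explicit ordered sweep: process $P_1,\dots,P_N$ in turn and add each $P_i$ to every group it is currently compatible with. Groups only grow, so any operator still addable at the end would have been added at its own turn, and the output is maximal. The paper's route pins down a specific single-pass procedure. Yours is shorter and obviously correct, and it applies to any sequence of insertions. For example, it shows directly that the while-loop of ad-hoc repacking terminates in a maximal repacking.

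\textbf{Non-uniqueness.} The paper exhibits two repackings $\Rcal,\Rcal'$ of a three-qubit grouping that have no common refinement. It leaves implicit the step that invokes the existence part: their maximal refinements must then be distinct. You instead write down two maximal repackings of the support $\{ZZ,XX,ZI\}$ directly and check maximality by hand, which is smaller and self-contained. The commutativity graph here is a path with $ZZ$ in the middle.

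\textbf{A minor correction.} Your discarded example $\{ZI,XI,IZ\}$ does not actually fail. The group $\{IZ\}$ can absorb $ZI$ or $XI$ but not both, since they anticommute. That example has the same path structure as your final one and also yields two distinct maximal repackings.
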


\begin{proof}
Suppose the Hamiltonian under consideration is $H = \sum_{i=1}^{N} c_i P_i$. Starting with $\Rcal$, let us create a refinement $\Rcal_1$ of $\Rcal$ by taking the Pauli operator $P_1$ and adding it to each $G'^{[j]} \in \Rcal$ such that $G'^{[j]} \cup P_1$ is still a commuting set. Next, we repeat the  same process for $P_2$, and create a refinement $\Rcal_2$ of $\Rcal_1$. This process is continued until we obtain a final refinement $\Rcal_N$ of $\Rcal_{N-1}$. Then $\Rcal_N$ is clearly a maximally repacked grouping by construction, and it is a refinement of $\Rcal$.

For the non-uniqueness part of maximally repacked groupings, we furnish an explicit example. Consider $\Gcal = \{\{IZI,YZX\},\{IXX,IYY\},\{XII\}\}$.  Two possible repacked groupings of $\Gcal$ are given by $\Rcal = \{\{IZI,YZX\},\{IXX,IYY\},\{XII,IZI\}\}$, and $\Rcal' = \{\{IZI,YZX\},\{IXX,IYY\},\{XII,IXX,IYY\}\}$. Note that there is no common refinement of $\Rcal$ and $\Rcal'$, since $IZI$ and $IYY$ do not commute.
\end{proof}

In the following sections, we introduce two particular repacking algorithms.

\subsubsection{Post-hoc repacking}

\begin{algorithm}
\label{alg:posthoc_repacking}
\footnotesize

Initialize $G'^{[j]} \gets G^{[j]}$ for all $j=1,\dots,m$\;

\For{$j=1,\dots,m$}{
    Let $U_j$ be the diagonalizing unitary for $G^{[j]}$\;

    \For{each $P_i \in H$}{
        $P_i' \gets U_j P_i U_j^\dagger$\;

        \If{$P_i'$ contains only $Z$ and $I$}{
            $G'^{[j]} \gets G'^{[j]} \cup \{P_i\}$\;
        }
    }
}

Output $\mathcal{R} = \{G'^{[1]},\dots,G'^{[m]}\}$\;
\caption{Post-hoc repacking}
\end{algorithm}

The post-hoc repacking scheme is motivated by basis-first measurement schemes such as classical shadows \cite{huangPredictingManyProperties2020} and derandomized shadows \cite{huangEfficientEstimationPauli2021}.  In those approaches, one first samples a measurement basis, most commonly a tensor-product of single-qubit Pauli operators, and then identifies all Pauli observables that are diagonal in that basis.  Any such observable can be estimated using the same set of bitstrings.  In this sense, the basis determines the grouping.

Grouping methods adopt the opposite perspective: a group of commuting target observables is first selected, and then a Clifford unitary $U_g$ is applied to diagonalize that entire group simultaneously~\cite{aaronsonImprovedSimulationStabilizer2004,bravyiCliffordCircuitOptimization2021}. However, unless the target group is complete, applying $U_g$ necessarily diagonalizes additional Pauli strings beyond the ones that were explicitly grouped, some of which may belong to $H$ and thus could contribute to the energy estimate if their expectation values were extracted.  

Post-hoc repacking exploits this observation.  For each group with fixed diagonalizing unitary $U_g$, we consider every Pauli operator $P_i \in H$ and compute the conjugated operator $P_i' = U_g P_i U_g^\dagger$.  If $P_i'$ contains only $Z$ and $I$ factors, then $P_i$ is diagonal in the measurement basis associated with $U_g$.  In that case, the bitstrings already collected from measuring group $G$ also provide information for estimating $\langle P_i \rangle$.
Computing this repacking requires $O(m|V|)$ conjugations where $m$ is the number of groups, which in the worst case matches the $O(|V|^2)$ scaling of the commutativity checks used to form the initial grouping and thus does not increase the asymptotic classical cost.

This procedure enables variance reduction using only previously collected data --- no additional quantum circuits are executed, no new measurement settings are introduced, and all improvements arise purely from post-processing the original bitstrings. As such, post-hoc repacking strictly generalizes any grouping-based experiment at the estimator level, enabling the experimenter to reuse all past data more effectively and to recover expectation values for additional terms that were implicitly diagonalized but not originally included in the measurement groups.  

Post-hoc repacking exposes a subtle but important mismatch between the estimator and the information content of the measurement data.  In a typical grouping-based experiment, the recorded bitstrings already contain unbiased information about operators that are implicitly diagonal in the applied measurement basis, regardless of whether those operators were explicitly targeted by the grouping procedure.  For such untargeted observables, the boundary between measured and unmeasured becomes a property of the estimator rather than of the data itself.  If an unbiased estimator exists that incorporates this additional information without increasing experimental cost, then continuing to use an estimator which systematically discards it is no longer a neutral modeling choice.  Post-hoc repacking therefore serves as a principled way of aligning the estimator with the full information content already present in the measurement data.  This observation does not imply that every practical estimator must exploit all such information, but it clarifies that any decision not to do so should be justified by an explicit tradeoff.  

With this in mind, a proactive choice can be made to intentionally repack groups to reduce variance or other cost metrics through an ad-hoc procedure based on continuing the greedy algorithm from the original grouping.

\subsubsection{Ad-hoc repacking}

\begin{algorithm}
\caption{Ad-hoc repacking}
\label{alg:adhoc-repacking}
\footnotesize

Initialize $G'^{[j]} \gets G^{[j]}$ for all $j=1,\dots,m$\;

\For{each Pauli operator $P_i$}{
    $\mu_i \gets \left|\{j : P_i \in G'^{[j]}\}\right|$\;
}

\While{there exists $(P_i,j)$ such that $P_i \notin G'^{[j]}$ and $P_i$ commutes with all operators in $G'^{[j]}$}{

    Select $P_i$ maximizing $C(P_i) = c_i^2 / \mu_i$\;

    Insert $P_i$ into the first compatible group $G'^{[j]}$ such that $P_i \notin G'^{[j]}$\;

    $G'^{[j]} \gets G'^{[j]} \cup \{P_i\}$\;

    $\mu_i \gets \mu_i + 1$\;
}

Output $\mathcal{R} = \{G'^{[1]},\dots,G'^{[m]}\}$\;

\end{algorithm}

The ad-hoc repacking procedure enables further variance reduction by modifying the grouping before constructing the diagonalizing circuits.  Conceptually, ad-hoc repacking occupies a middle ground between the basis-first and grouping-first perspectives.  A fixed basis uniquely determines which Pauli operators are jointly diagonal, and thus determines a grouping, but a fixed grouping generally admits many valid diagonalizing bases.  Ad-hoc repacking exploits this asymmetry.  Rather than deriving groups from a basis, it selectively enlarges the groups themselves so that the subsequent basis construction yields richer diagonal structure.  

We extend each original group by inserting additional compatible operators, using the same sorted insertion rule as in the original grouping procedure.  That is, each Pauli string is placed into the earliest existing group for which it it is compatible and of which it is not already a member.  To heuristically reduce variance, we score each string by its independent shot-noise contribution to the estimated variance while ignoring covariance under a simple allocation strategy, and insert strings greedily into groups in descending score until all strings are exhausted.  The score for each string is chosen to be
\begin{equation}
    C(P_i) = c_i^2/\mu_i 
\end{equation}
where $c_i$ is the coefficient of $P_i$ in $H$ and $\mu_i$ is the number of groups in which $P_i$ appears.  After each insertion of $P_i$, the score $C(P_i)$ is recomputed.  This scoring can be viewed as a natural extension of the sorted insertion algorithm to the repacking setting, where multiplicity across groups down-weights the priority of each term. Pseudocode for the ad-hoc repacking algorithm is shown in Algorithm~\ref{alg:adhoc-repacking}. Notice that the output of Algorithm~\ref{alg:adhoc-repacking} is always a maximally repacked grouping.

A related greedy extension of disjoint groupings to overlapped groupings was previously proposed in~\cite{yenDeterministicImprovementsQuantum2023} where a similar coefficient-based scoring rule was stated without the down-weighting by group multiplicity used here.  Although these scoring rules are natural and extremely cheap to evaluate, they are not expected to be globally optimal, even within covariance free heuristics.  Beyond the local, suboptimal choices made by the greedy algorithm, this scoring does not take into account optimal allocation so it cannot accurately reflect the variance reduction for each repacking.  Our goal here is not to claim optimality, but to illustrate how modifying the grouping before circuit synthesis provides a tractable and effective means of variance reduction.  A more refined scoring method could incorporate partial variance and covariance estimates for a given state as in \cite{greschGuaranteedEfficientEnergy2025, yenDeterministicImprovementsQuantum2023, shlosbergAdaptiveEstimationQuantum2023}, a penalty for increasing circuit depth, or other structural features of the Hamiltonian.

\section{Results} \label{sec:results}

Here we present and prove our analytical results (Sec.~\ref{subsec:analyticalResults}) and present our numerical results (Sec.~\ref{sec:numerical-results}).

\subsection{Analytical results}
\label{subsec:analyticalResults}

The primary goal of this section is to  prove the main theorems from Sec.~\ref{sec:main-results}. From this section, Theorem~\ref{thm:maximal-variance-reduction} is restated and proved as Theorem~\ref{thm:maximal-variance-reduction-restatement}, and Theorem~\ref{thm:repacking-reduces-variance} is restated and proved as Theorem~\ref{thm:monotonicity-zero-cov}. Intermediate and additional results are stated and proved along the way.

\newpage

\subsubsection{Maximal variance reduction for overlapped grouping:\\Proof of Theorem~\ref{thm:maximal-variance-reduction} (Restated as Theorem~\ref{thm:maximal-variance-reduction-restatement})}

In this section we prove Theorem~\ref{thm:maximal-variance-reduction} on the maximal variance reduction possible for overlapped grouping, which we restate for convenience below. Throughout this section, we refer to Fig.~\ref{fig:partial_order} to illustrate the family of Hamiltonians which admit this maximal variance reduction, as well as the particular groupings $\G$ and $\G'$ and the repacked overlapped grouping $\Rcal$ from this figure. 

\begin{theorem} (Restatement of Theorem~\ref{thm:maximal-variance-reduction}.) 
\label{thm:maximal-variance-reduction-restatement}
There exist Hamiltonians and states for which the variance reduction of overlapped grouping relative to the best disjoint grouping found by sorted insertion is $\Theta(m)$, where $m$ is the number of disjoint groups. Assuming a model of state-independent variance and zero covariance, this is the asymptotic maximal variance reduction that can be achieved by overlapped grouping. 
\end{theorem}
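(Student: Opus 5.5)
The proof has two parts: an explicit construction achieving an $\Theta(m)$ reduction (the lower bound), and a general argument that no overlapped grouping can do better than $O(m)$ under the stated model (the upper bound). The model throughout is zero covariance, the shot-weighted estimator \eqref{eqn:heuristic-estimator-for-weights}, and state-independent variances; for the construction I take a state with $\langle P_i\rangle = 0$ for all $i$, so every $\sigma^2_{P_i}=1$.

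\textbf{Upper bound.} Fix any disjoint grouping $\Gcal$ with $m$ groups, at its zero-covariance optimal allocation \eqref{eq: zero-cov-allocation}. With $V_j=\sum_{i\in G^{[j]}}c_i^2$, this gives $\Vr{\ovl{E}_\Gcal} = \big(\sum_j \sqrt{V_j}\big)^2/M_{tot}$. For any overlapped grouping, each term $P_i$ receives at most $M_{tot}$ effective shots, so its variance is at least $\sum_i c_i^2/M_{tot}$. Cauchy--Schwarz gives $\big(\sum_j\sqrt{V_j}\big)^2 \le m\sum_j V_j = m\sum_i c_i^2$. Hence the ratio is at most $m$. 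Repacking cannot change $m$, and any overlapped grouping with $m'$ groups is likewise bounded by its own count, so this yields the $O(m)$ ceiling.

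\textbf{Lower bound.} I use the family of Fig.~\ref{fig:partial_order}.
\begin{itemize}
\item The Hamiltonian has $n$ cliques $K_n$, and all $n^2$ clique terms mutually commute.
\item There are $n-1$ exterior nodes $e_k$, each commuting only with the clique $k$ adjacent to it.
\item Coefficients are $O(1)$ and ordered so that sorted insertion is forced into a bad grouping. For example, take the exterior coefficients slightly larger, so they are processed first and seed separate groups. Each clique then joins its exterior node's group, giving grouping $\G$ with $m=n$ groups and $\Theta(n)$ terms per group.
\end{itemize}
The other natural output $\G'$ places all cliques in one group and each exterior node in a singleton, for $m=n$ groups again. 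I need to verify that every tie-breaking/ordering of sorted insertion yields one of these up to $O(1)$ factors, and compute both variances:
\begin{itemize}
\item For $\G$: $V_j\approx n$, so the variance is $(n\sqrt n)^2/M = n^3/M$.
\item For $\G'$: the clique group has $V\approx n^2$, and each singleton has $V=\Theta(1)$, so the variance is $(n+(n-1))^2/M\approx 4n^2/M$.
\end{itemize}
So the best sorted-insertion variance is $\Theta(n^2)/M$.

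For the overlapped grouping $\Rcal$, repack $\G$ by adding every clique term to every group, which is legal since the cliques mutually commute. Under uniform allocation each clique term gets all $M$ shots, and each exterior node gets $M/n$ shots. The variance is then $(n^2 + (n-1)\,n)/M=\Theta(n^2)/M$. That is not yet a $\Theta(m)$ gain over $\G'$.

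I must therefore tune the construction so the exterior contribution does not dominate. Two options:
\begin{itemize}
\item Give the exterior nodes larger coefficients, which still seeds sorted insertion badly.
\item Scale the clique sizes relative to the number of cliques.
\end{itemize}
The goal is that in $\G'$ the sum of singleton standard deviations is $\Theta(m)$ times the optimum, while in $\Rcal$ all terms get $\Theta(M)$ effective shots. A natural choice is exterior coefficients $c_e=\Theta(1)$ with clique terms chosen so that $\sum_{\text{clique}} c^2=\Theta(n)$. Then $\G'$ costs $(\sqrt n + n)^2\sim n^2$, and $\G$ costs $n\cdot(\sqrt{1+\text{clique share}})^2\cdot n$. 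Meanwhile $\Rcal$, with exterior nodes allocated $M/n$ each, only gives $n\cdot n$, so the allocation must be optimized. By \eqref{eq: optallocation} the exterior group can hold all weight on its clique, which is shared, so $\Rcal$ achieves the bound $\sum_i c_i^2/M$ up to constants when the clique share dominates.

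\textbf{Main obstacle.} Choosing coefficients (and possibly clique sizes) so that three things hold at once: sorted insertion's ordering still forces $\G$ or $\G'$; both achieve only $\Theta(m)\sum_i c_i^2/M$; and $\Rcal$ attains $\Theta(\sum_i c_i^2/M)$. The key is making all exterior nodes individually heavy enough to seed separate groups, while every group's weight stays balanced so the Cauchy--Schwarz inequality in the upper bound is nearly tight. I would first set all coefficients equal to $1$ with exterior nodes tied and processed first, check both groupings, and then adjust the clique sizes if needed.
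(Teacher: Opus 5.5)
\textbf{Upper bound.} This part is correct. It is the paper's concavity remark made precise: Cauchy--Schwarz on $\sum_j\sqrt{V_j}$, plus the cap of $M_{tot}$ effective shots per term.

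\textbf{Lower bound.} This part is never established, and the step you do take is invalid. In $\G$, the group indexed by $\ell\neq z$ contains the exterior node $\ell$. That node anticommutes with every clique term $U\ell'$ with $\ell'\neq\ell$, so you cannot add every clique term to every group. The only group of $\G$ that can absorb all $n^2$ clique terms is the exterior-free group $\{Az,Bz,\dots\}$; doing exactly that gives the paper's $\Rcal$. Under uniform allocation this legal $\Rcal$ is still $\Theta(n^3)/M$, since each $U\ell$ with $\ell\neq z$ gets only $2M/n$ shots. So the choice of allocation is essential.

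\textbf{Wrong benchmark.} The deeper problem is that you compare against the wrong grouping. With the exterior coefficients strictly larger, sorted insertion is deterministic and outputs $\G$ however the remaining ties are broken. Each $U\ell$ lands in group $\ell$, its first compatible group, and the first $Uz$ opens a new last group that the other $Uz$ join. So $\G'$ is not a sorted-insertion output, and $\Theta(n^2)/M$ is not the benchmark.

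Your proposed retuning cannot work in this family anyway. The exterior nodes pairwise anticommute, so no group contains two of them and their effective shot counts sum to at most $M$. Hence any overlapped grouping pays at least $\max\{W_C,(\sum_\ell|c_\ell|)^2\}/M$, where $W_C$ is the total squared clique weight. Meanwhile $\G'$ costs $(\sqrt{W_C}+\sum_\ell|c_\ell|)^2/M$, which is at most four times that. So your ``main obstacle'' of making $\G'$ bad while $\Rcal$ is good is unattainable for any coefficients or clique sizes, and it is also unnecessary.

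\textbf{Missing idea.} $\G'$ is the \emph{witness}, not the competitor. $\Rcal$ is a repacking of both $\G$ and $\G'$ (Lemma~\ref{lem:common-repacking}). Measure $\Rcal$ with the allocation optimal for $\G'$: about $M/2$ on the clique group and $M/(2n-1)$ on each other group. By Corollary~\ref{cor:monotonicity-shot-weighted} (or Theorem~\ref{thm:repacking-never-increases-variance-with-covariance}), its variance is at most $(2n-1)^2/M$. This compares with $\Theta(n^3)/M$ for $\G$ at its own optimal allocation. That yields the $\Theta(n)=\Theta(m)$ ratio with all $|c_i|=1+O(\epsilon)$ and no retuning.
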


At a high level, the proof sketch is as follows. We show that even when shot allocation is chosen optimally under the standard state-independent variance model, assuming a maximally mixed state and vanishing covariance, the sorted insertion grouping heuristic can exhibit arbitrarily worse variance under repacking.  This is demonstrated by identifying two different valid groupings of the same Hamiltonian, one of which arises from sorted insertion, and showing that the variance of sorted insertion is worse by factor of $\Theta(m)$.  Then, we show that both groups can be repacked into the same grouping as in Fig.~\ref{fig:partial_order}. By invoking the partial order, we then see that the repacked grouping achieves a $\Theta(m)$ reduction in variance over sorted insertion.

First, we show that the variance estimator~\eqref{eqn:heuristic-estimator-for-weights} is optimal in the limit of zero covariance, in the sense that it provides the minimum-variance unbiased energy estimator.

\begin{lemma}
\label{lem:oracle-reduces-to-averaging}
Fix a grouping $\mathcal{G}=\{G_1,\dots,G_m\}$ and shot counts $\{M_j\}_{j=1}^m$.
Assume:
\begin{enumerate}
    \item The shot batches across different groups are independent.
    \item For any group $G^{[j]}$, the estimators of distinct Pauli terms measured in that group have zero covariance, i.e.,
\[
\mathrm{Cov}\!\big(
\overline{\langle P_i\rangle}_{(j)},
\overline{\langle P_k\rangle}_{(j)}
\big) = 0
\quad \text{for all distinct } P_i, P_k \in G^{[j]}.
\]
\end{enumerate}
Then the minimum-variance unbiased energy estimator~\eqref{eq:generalEnergyEstimator}
is attained by the heuristic estimator~\eqref{eqn:heuristic-estimator-for-weights}.
\end{lemma}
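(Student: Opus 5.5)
We reduce the variance of the general estimator~\eqref{eq:generalEnergyEstimator} under the two assumptions to a sum of decoupled one-term problems, and then solve each by a Lagrange-multiplier (or Cauchy--Schwarz) argument. Write $\overline{\langle P_i\rangle}_{(j)}$ for the empirical mean of $P_i$ from the $M_j$ shots of group $j$; it has variance $\sigma_{P_i}^2/M_j$.

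\emph{Step 1: decouple the variance.} Start from~\eqref{eq:variance-expr}. Each covariance term $\mathrm{Cov}(\overline{\langle P_i\rangle}_{\mathcal G}(w),\overline{\langle P_k\rangle}_{\mathcal G}(w))$ expands bilinearly into a sum of $w_{i,j}w_{k,l}\,\mathrm{Cov}(\overline{\langle P_i\rangle}_{(j)},\overline{\langle P_k\rangle}_{(l)})$. For $j\neq l$ these vanish by assumption 1 (independent shot batches). For $j=l$ they vanish by assumption 2. Hence
\begin{equation}
\mathrm{Var}\big(\overline{E}_{\mathcal G}(w)\big)=\sum_i c_i^2\,\sigma_{P_i}^2\sum_{j\in\Gamma_{\mathcal G}(i)}\frac{w_{i,j}^2}{M_j}.
\end{equation}
The weights $\{w_{i,j}\}_j$ for different $i$ enter separate summands and separate constraints~\eqref{eqn:unbiasedness-constraint-energy-estimator-weights}. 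So it suffices to minimize, for each fixed $i$, the function $f_i(w)=\sum_{j\in\Gamma(i)}w_{i,j}^2/M_j$ subject to $\sum_j w_{i,j}=1$. If $c_i\sigma_{P_i}=0$, every choice is optimal, including~\eqref{eqn:heuristic-estimator-for-weights}.

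\emph{Step 2: solve the per-term problem.} The function $f_i$ is a strictly convex quadratic on an affine set, so a stationary point is the unique minimizer. The Lagrange condition $2w_{i,j}/M_j=\lambda$ gives $w_{i,j}\propto M_j$, and normalization yields exactly~\eqref{eqn:heuristic-estimator-for-weights}. Alternatively, Cauchy--Schwarz gives
\begin{equation}
1=\Big(\sum_j w_{i,j}\Big)^2\le\Big(\sum_j \frac{w_{i,j}^2}{M_j}\Big)\Big(\sum_j M_j\Big),
\end{equation}
so $f_i\ge 1/\sum_{j\in\Gamma(i)}M_j$, with equality iff $w_{i,j}\propto M_j$. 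This also shows the minimal variance is $\sum_i c_i^2\sigma_{P_i}^2/\sum_{j\in\Gamma(i)}M_j$, the ``effective shot count'' form used later.

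\emph{Main obstacle.} The only delicate point is Step 1. One must make sure the estimator family is exactly the linear one in~\eqref{eq:generalPauliEstimator}, so "minimum-variance unbiased" means minimum over admissible $w$. One must also check that the cross-group covariances genuinely vanish even when $P_i$ and $P_k$ share several groups. Assumptions 1 and 2 together handle both, so I expect the rest to be routine.
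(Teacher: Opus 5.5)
Your proposal is correct and follows essentially the same route as the paper. Both arguments use the two assumptions to eliminate all cross terms, decouple the problem into independent per-term constrained quadratic programs, and solve each with a Lagrange multiplier to obtain $w_{i,j}\propto M_j$.

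Your version is slightly more careful than the paper's in two respects. You handle the degenerate case $c_i\sigma_{P_i}=0$, where the paper's step $w_{i,j}\propto 1/\mathrm{Var}(\overline{\langle P_i\rangle}_{(j)})$ is ill-defined because it divides by zero. Your Cauchy--Schwarz variant also gives the minimal value $\sum_i c_i^2\sigma^2_{P_i}/\sum_{j\in\Gamma_{\mathcal G}(i)}M_j$ directly.
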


\begin{proof}
Under assumptions (1)--(2), all covariance terms between distinct Pauli observables vanish. Therefore, the energy variance~\eqref{eq:variance-expr} becomes
\begin{equation}
\mathrm{Var}\!\left(\overline{E}_\mathcal{G}(w)\right)
=
\sum_i c_i^2\,\mathrm{Var}\!\left(\overline{\langle P_i\rangle}_\mathcal{G}(w)\right).
\end{equation}
For each $i$, we have
\begin{equation}
\overline{\langle P_i\rangle}_\mathcal{G}(w)
=
\sum_{j\in \Gamma_\mathcal{G}(i)} w_{i,j}\,\overline{\langle P_i\rangle}_{(j)}.
\end{equation}
Since the shot batches across different groups are independent
\begin{equation}
\mathrm{Var}\!\left(\overline{\langle P_i\rangle}_\mathcal{G}(w)\right)
=
\sum_{j\in \Gamma_\mathcal{G}(i)} w_{i,j}^2\,
\mathrm{Var}\!\big(\overline{\langle P_i\rangle}_{(j)}\big).
\end{equation}

For fixed $i$, minimizing the energy variance reduces to the constrained convex quadratic program
\begin{equation}
\min_{\{w_{i,j}\}}
\sum_{j\in \Gamma_\mathcal{G}(i)} w_{i,j}^2\,
\mathrm{Var}\!\big(\overline{\langle P_i\rangle}_{(j)}\big)
\quad\text{s.t.}\quad
\sum_{j\in \Gamma_\mathcal{G}(i)} w_{i,j}=1.
\end{equation}
Introducing a Lagrange multiplier $\lambda_i$ and differentiating yields
\begin{equation}
2 w_{i,j}\,\mathrm{Var}\!\big(\overline{\langle P_i\rangle}_{(j)}\big)
- \lambda_i = 0
\quad\Rightarrow\quad
w_{i,j} \propto \frac{1}{\mathrm{Var}\!\big(\overline{\langle P_i\rangle}_{(j)}\big)}.
\end{equation}
When $P_i$ is measured $M_j$ times in group $j$,
\begin{equation}
\mathrm{Var}\!\big(\overline{\langle P_i\rangle}_{(j)}\big)
=
\frac{1 - \langle P_i\rangle^2}{M_j},
\end{equation}
which differs across $j$ only through the factor $1/M_j$. Therefore,
\begin{equation}
w_{i,j}^\star \propto M_j,
\end{equation}
and normalizing gives
\begin{equation}
w_{i,j}^\star
=
\frac{M_j}{\sum_{k\in \Gamma_\mathcal{G}(i)} M_k},
\end{equation}
which is exactly~\eqref{eqn:heuristic-estimator-for-weights}.
\end{proof}

Next, we characterize the variance of the groupings.

\begin{lemma}
\label{prop:si-suboptimal}
There exist Hamiltonians for which sorted insertion   produces a grouping whose estimator variance, under optimal state-independent shot allocation, exceeds that of another valid grouping by a factor of $\Theta(m)$, where $m$ is the number of groups.
\end{lemma}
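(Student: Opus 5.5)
The plan is to construct the family of Fig.~\ref{fig:partial_order} explicitly, force sorted insertion to output the grouping $\G$, and compare it with the grouping $\G'$. I will work in the model of Lemma~\ref{lem:oracle-reduces-to-averaging}: zero covariance and $\sigma^2_{P_i}=1$, as for the maximally mixed state. The shot allocation is the optimal one,~\eqref{eq: zero-cov-allocation}.

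\textbf{Step 1 (allocation formula).} Take a disjoint grouping with groups of sizes $N_j$. The variance is $\sum_j V_j/M_j$ with $V_j=\sum_{i\in G^{[j]}} c_i^2$. Minimizing subject to $\sum_j M_j=M_{tot}$, by Cauchy--Schwarz or a Lagrange multiplier, gives
\begin{equation}
\mathrm{Var}_{\min}=\frac{1}{M_{tot}}\Big(\sum_j \sqrt{V_j}\Big)^2 .
\end{equation}
All coefficients will lie in $[1,2]$, so $V_j=\Theta(N_j)$, and this reduces to comparing $\big(\sum_j\sqrt{N_j}\big)^2$ across groupings.

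\textbf{Step 2 (the commutativity graph).} The graph has $n$ cliques $K^{(1)},\dots,K^{(n)}$, each with $n$ nodes. Every clique node commutes with every other clique node, so their union is a clique of size $n^2$. There are also exterior nodes $e_1,\dots,e_{n-1}$. Each $e_k$ commutes exactly with the nodes of $K^{(k)}$, and with nothing else, including the other $e_l$.

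This graph must be realized by Pauli strings in $\mathrm{supp}(H)$. I would invoke, or briefly prove, the standard fact that every finite graph is the commutation graph of some set of Pauli strings on polynomially many qubits. A direct construction assigns one qubit pair per anticommuting edge and places anticommuting single-qubit factors there, with identity elsewhere. I expect this realization step to be the main technical obstacle, since distinctness and the exact anticommutation pattern must be checked.

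\textbf{Step 3 (sorted insertion yields $\G$).} Choose strictly decreasing coefficients in $[1,2]$ in the order
\begin{equation}
e_1,\;K^{(1)},\;e_2,\;K^{(2)},\;\dots,\;e_{n-1},\;K^{(n-1)},\;K^{(n)} .
\end{equation}
Strictness makes sorted insertion deterministic, so its output is the unique, hence best, grouping it produces. The output is built by induction.
\begin{itemize}
\item $e_k$ anticommutes with every earlier group, since each earlier group contains some $e_l$ or nodes of $K^{(l)}$ with $l\neq k$. So $e_k$ opens a new group.
\item The nodes of $K^{(k)}$ fail for earlier groups because those contain $e_l$ with $l<k$. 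They join the group of $e_k$.
\item $K^{(n)}$ is incompatible with every existing group, because each contains an exterior node. So it forms group $n$.
\end{itemize}
Thus $\G$ has $m=n$ groups: $n-1$ of size $n+1$ and one of size $n$. Step 1 gives $\mathrm{Var}_\G=\Theta\big((n\sqrt{n})^2\big)/M_{tot}=\Theta(n^3)/M_{tot}$.

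\textbf{Step 4 (the comparison grouping $\G'$).} Let $\G'$ consist of the single $n^2$-clique together with $n-1$ singleton groups $\{e_k\}$. It also has $m=n$ groups, and $\mathrm{Var}_{\G'}=\Theta\big((n+(n-1))^2\big)/M_{tot}=\Theta(n^2)/M_{tot}$. The ratio is $\Theta(n)=\Theta(m)$, which proves the lemma.

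The group counts agree, so the comparison is fair. One should also check that the bounded coefficient perturbations in Step 3 change each $V_j$ only by constant factors; this holds because all coefficients lie in $[1,2]$.
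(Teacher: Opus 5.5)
Your proposal is correct and follows the paper's proof essentially step for step. It uses the same $n$ cliques plus $n-1$ exterior nodes, the same comparison grouping $\mathcal{G}'$, and the same $\bigl(\sum_j \sqrt{S_j}\bigr)^2/M$ computation giving $\Theta(n^3)$ versus $\Theta(n^2)$. Your explicit coefficient ordering $e_1, K^{(1)}, e_2, \dots, K^{(n)}$ and your edge-by-edge Pauli realization (which works with one qubit per anticommuting pair) fill in details the paper only asserts, namely its ``suitable choices of perturbation'' and its assumed commutation relations.
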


\begin{proof}
Let $\mathcal{L} = \{a,b,c,\dots\}$ be a set of $L$ lowercase labels and let $\mathcal{U} = \{A,B,C,\dots\}$ be a corresponding set of uppercase labels of the same cardinality.  
Construct a Hamiltonian $H=\sum_i c_i P_i$ containing Pauli operators labeled by ordered pairs of letters drawn from $\mathcal{U}\times\mathcal{L}$, together with lowercase-only operators.

Specifically, for each lowercase letter $\ell\in\mathcal{L}$, include the operators
$
A\ell,\; B\ell,\; C\ell,\;\dots
$
(one for each uppercase letter), and additionally include the lowercase-only operator $\ell$ for all but one choice of $\ell$. Assume the commutativity relations satisfy:
\begin{enumerate}
    \item All operators whose labels contain an uppercase letter mutually commute.
    \item A lowercase-only operator $\ell$ commutes with exactly those operators whose label ends in the same lowercase letter $\ell$, and anticommutes with all operators whose lowercase label differs.
\end{enumerate}
Refer to Fig.~\ref{fig:partial_order} for an illustration of the Hamiltonian (commutativity graph).

Choose coefficients $c_i$ such that $|c_i|= 1 + \epsilon = O(1)$ for all operators. Here $-1 / N \le \epsilon \le 1 / N$ is a small perturbation and $N$ is the number of nodes in the graph (terms in the Hamiltonian).
Under sorted insertion with suitable choices of perturbation for the coefficients, this construction yields $m=L$ groups of the form
\begin{equation}
\begin{aligned}
\mathcal{G}
&=
\Bigl\{
\{Aa,Ba,Ca,\dots,a\},
\{Ab,Bb,Cb,\dots,b\}, \\
&\qquad
\dots,\,
\{Az,Bz,Cz,\dots\}
\Bigr\},
\end{aligned}
\end{equation}
where the final group contains no lowercase-only operator. Now consider the alternative valid grouping
\begin{equation}
\begin{aligned}
\mathcal{G}' = {}&
\Bigl\{
\{Aa,Ba,\dots,Ab,Bb,\dots,\dots,Az,Bz,\dots\}, \\
&\qquad
\{a\},\{b\},\dots
\Bigr\}
\end{aligned}
\end{equation}
consisting of a single group containing all operators with uppercase letters and singleton groups for each lowercase-only operator. Refer to Fig.~\ref{fig:partial_order} for illustrations of $\G$ and $\G'$.

Assume the standard state-independent variance model used for shot allocation and a worst-case maximally mixed state.  This implies each Pauli observable has variance $1$ and that covariance terms vanish.  
For a disjoint grouping $\mathcal{G}=\{G^{[j]}\}$ with shot allocation $\{M_j\}$ and total budget $\sum_j M_j=M$, the estimator variance is
\begin{equation}
\mathrm{Var}(\overline{E})
=
\sum_j \frac{S_j}{M_j}
\end{equation}
where $S_j := \sum_{i\in G^{[j]}} c_i^2$. This is minimized by allocating $M_j \propto \sqrt{S_j}$, yielding
\begin{equation}
\min \mathrm{Var}(\overline{E})
=
\frac{\bigl(\sum_j \sqrt{S_j}\bigr)^2}{M}.
\end{equation}
Note that under the assumptions of this theorem and Lemma~\ref{lem:oracle-reduces-to-averaging}, the heuristic estimator $\overline{E}_\mathcal{G}$ coincides with this estimator.

In $\mathcal{G}'$, the large group contains $L^2$ operators with $c_i^2=\Theta(1)$, so $S_{\mathrm{big}}=\Theta(L^2)$ and $\sqrt{S_{\mathrm{big}}}=\Theta(L)$, while each singleton group contributes $O(1)$. Hence,
\begin{equation}
\min \mathrm{Var}\!\left(\overline{E}_{\mathcal{G}'}\right)
=
\Theta\!\left(\frac{L^2}{M}\right).
\end{equation}

In $\mathcal{G}$, each of the first $L-1$ groups contains $L$ uppercase-labeled operators and one lowercase-only operator, giving $S_j=\Theta(L+1)$, while the final group has $S=\Theta(L)$. Therefore,
\begin{equation}
\min \mathrm{Var}(\overline{E}_\mathcal{G})
=
\Theta\!\left(\frac{L^3}{M}\right).
\end{equation}
Taking the ratio yields
\begin{equation}
\frac{\min \mathrm{Var}(\overline{E}_\mathcal{G})}
{\min \mathrm{Var}(\overline{E}_\mathcal{G'})}
=
\Theta(L)
=
\Theta(m),
\end{equation}
which proves the claim.
\end{proof}

Next, we establish that the overlapped grouping $\Rcal$ shown in Fig.~\ref{fig:partial_order} is a repacking of the (disjoint) groupings $\G$ and $\G'$ shown in this same figure and discussed in the previous proof.

\begin{lemma}[A common repacking dominates both $\mathcal{G}$ and $\mathcal{G}'$]
\label{lem:common-repacking}
In the same setting as Theorem~\ref{thm:maximal-variance-reduction-restatement}, there exists an overlapped $\mathcal{R}$ such that
(i) $\mathcal{R}$ is a repacking of the sorted insertion grouping $\mathcal{G}$, and
(ii) $\mathcal{R}$ is also a repacking of the alternative grouping $\mathcal{G}'$.
\end{lemma}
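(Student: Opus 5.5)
The plan is to write down $\mathcal{R}$ explicitly and check Definition~\ref{def:repacking} against both groupings, after matching the groups of $\mathcal{G}'$ to those of $\mathcal{G}$ by a suitable relabeling. Definition~\ref{def:repacking} indexes the groups of a repacking to match the original grouping, so the first step is to count groups. The grouping $\mathcal{G}$ from the proof of Lemma~\ref{prop:si-suboptimal} has $L$ groups: for each lowercase $\ell$ other than the distinguished letter (call it $z$) there is $G_\ell=\{A\ell,B\ell,\dots,\ell\}$, and there is one final group $G_z=\{Az,Bz,\dots\}$. The alternative grouping $\mathcal{G}'$ also has exactly $L$ groups: the big group $G'_{\mathrm{big}}=\mathcal{U}\times\mathcal{L}$ and the $L-1$ singletons $\{\ell\}$ for $\ell\neq z$. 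I will therefore use the bijection that sends $G_\ell$ to $\{\ell\}$ for $\ell\neq z$ and sends $G_z$ to $G'_{\mathrm{big}}$. Since the order in which groups are listed carries no meaning, I will reindex $\mathcal{G}'$ along this bijection.

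Next I define
\[
\mathcal{R}=\Bigl\{R_\ell := G_\ell \ (\ell\neq z),\quad R_z := \{X\ell : X\in\mathcal{U},\ \ell\in\mathcal{L}\}\Bigr\}.
\]
In words, $\mathcal{R}$ keeps every group of $\mathcal{G}$ except the last one, which is enlarged to contain every uppercase-labeled operator. This is the top panel of Fig.~\ref{fig:partial_order}.

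Then I verify that $\mathcal{R}$ is an overlapped grouping in the sense of Definition~\ref{def:overlapped-grouping}. For covering, every uppercase-labeled operator lies in $R_z$, and every lowercase-only operator $\ell$ lies in $R_\ell$. For commutativity, each $R_\ell$ with $\ell\neq z$ equals $G_\ell$, which is commuting by assumption (2) of the construction, since $\ell$ commutes with all operators ending in $\ell$, and by assumption (1) for the pairs of uppercase-labeled operators. The set $R_z$ is commuting by assumption (1) alone.

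Finally I check the two containments. For (i), $G_\ell = R_\ell$ for $\ell\neq z$, and $G_z=\{Xz\}\subseteq R_z$. For (ii), under the reindexing, $\{\ell\}\subseteq R_\ell$ and $G'_{\mathrm{big}} = R_z$. The only delicate point is this bookkeeping: the two groupings must be matched group by group through a bijection, and no new group may be introduced. The explicit matching above settles it, so I expect no real obstacle.
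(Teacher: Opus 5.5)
Your proposal is correct and follows the paper's proof: you define the same $\mathcal{R}$, keeping the groups $\{A\ell,B\ell,\dots,\ell\}$ for $\ell\neq z$ and enlarging the final group to all uppercase-labeled operators, and you check the two containments as the paper does. Your explicit bijection ($G_z$ to the big group of $\mathcal{G}'$, and $G_\ell$ to $\{\ell\}$) and your covering and commutativity check for $\mathcal{R}$ spell out bookkeeping that the paper leaves implicit in the phrase ``on the same $m=L$ group indices.''
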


\begin{proof}
Define the grouping $\mathcal{R}$ on the same $m=L$ group indices as follows. For each lowercase label
$\ell\in\mathcal{L}$ other than the distinguished letter $z$ (the one for which no lowercase-only term is included),
let
\[
R^{[\ell]} := \{A\ell, B\ell, C\ell, \dots, \ell\},
\]
and 
\[
R^{[z]} := \{A a, B a, \dots, A b, B b, \dots, \dots, A z, B z, \dots\},
\]
containing all operators whose labels contain an uppercase letter. Refer to Fig.~\ref{fig:partial_order} for an illustration of $\Rcal$.

We first show that $\mathcal{R}$ is a repacking of $\mathcal{G}$. By construction,
$\mathcal{G}$ already contains $R^{[\ell]}$ for each $\ell\neq z$, and its final group contains the subset
$\{A z, B z, \dots\} \subseteq R^{[z]}$. Since all uppercase-labeled operators mutually commute, we may insert the
remaining uppercase-labeled operators into the final sorted insertion group without creating incompatibility. Hence each original sorted insertion
group is a subset of the corresponding $R^{[\ell]}$, and no new groups are created, so $\mathcal{R}$ is a valid repacking
of $\mathcal{G}$.

Next, we show that $\mathcal{R}$ is a repacking of $\mathcal{G}'$. The grouping $\mathcal{G}'$ contains the large
group of all uppercase-labeled operators, which is exactly $R^{[z]}$. For each singleton group $\{\ell\}$ with $\ell\neq z$,
we may insert the operators $\{A\ell,B\ell,C\ell,\dots\}$, because the construction assumes that $\ell$ commutes with
exactly those uppercase-labeled operators carrying the same lowercase index $\ell$. Thus $\{\ell\}\subseteq R^{[\ell]}$ and
compatibility is preserved. Again, no new groups are created, so $\mathcal{R}$ is a repacking of $\mathcal{G}'$.
\end{proof}

In the following section, we show in Theorem~\ref{thm:repacking-never-increases-variance-with-covariance} that repacking never increases variance, in particular that $\mathrm{Var}^*(\Rcal) \le \mathrm{Var}^*(G)$ for any repacking $\Rcal$ of grouping $\G$ where
\begin{equation}
    \text{Var}^*(\Rcal) = \min_w \text{Var}(\overline{E}_\Rcal(w))
\end{equation}
and similarly for $\mathrm{Var}^*(\mathcal{G})$.
From Lemma~\ref{lem:common-repacking}, we thus have that $\mathrm{Var}^*(\mathcal{R}) \le \mathrm{Var}^*(\mathcal{G})$
and $\mathrm{Var}^*(\mathcal{R}) \le \mathrm{Var}^*(\mathcal{G}')$. 
By Lemma~\ref{prop:si-suboptimal}, it follows that
\[
\frac{\mathrm{Var}^*(\mathcal{G})}{\mathrm{Var}^*(\mathcal{R})}
\ge
\frac{\mathrm{Var}^*(\mathcal{G})}{\mathrm{Var}^*(\mathcal{G}')}
=
\Theta(m).
\]
This bound is asymptotically tight under the state-independent variance model with vanishing covariance and optimal shot allocation. 
In this setting, the estimator variance is given by
\begin{equation}
\mathrm{Var}^*\!\left(\overline{E}_\mathcal{G}\right)
= \frac{\left(\sum_j \sqrt{S_j}\right)^2}{M}
\end{equation}
where $S_j = \sum_{i \in G^{[j]}} c_i^2$. 
By concavity of the square root, this quantity is minimized when all operators are grouped together and maximized when weight is distributed evenly across groups. 
Consequently, the maximal variance reduction achievable by overlapped grouping is $\Theta(m)$, which completes the proof of Theorem~\ref{thm:maximal-variance-reduction-restatement}. 

We remark that by changing the allocation strategy from state-independent optimal allocation to uniform allocation and changing the initial grouping to $\mathcal{G'}$, the same commutativity structure yields an asymptotically identical variance reduction through repacking to the same grouping $\mathcal{R}$, but with  $\mathcal{G}$ playing the role of the optimal grouping.  Although the optimal allocation is superior in absolute terms, this role reversal highlights that repacking can endow a grouping with a degree of  robustness to suboptimal allocation strategies.  Consistent with this interpretation, our numerical results in Sec.~\ref{sec:numerical-results} show that repacking offers its largest variance reductions whenever the initial grouping or allocation strategy is poorly matched to the Hamiltonian or grouping, while yielding more modest reductions whenever these choices are already well aligned with the problem.  While in application it is prudent to use quality allocation strategies, this effect demonstrates that overlapping groupings provide support for the use of approximate or adaptive allocation strategies which may not realize full alignment with the optimal strategy.

\subsubsection{Deterministic algorithms for reducing variance:\\Proof of Theorem~\ref{thm:repacking-reduces-variance} (Restated as Theorem~\ref{thm:monotonicity-zero-cov})}

The primary goal of this section is to prove Theorem~\ref{thm:repacking-reduces-variance}, which we restate for convenience below. Intuitively, this result says that overlapped groupings found by repacking always have smaller variance than the disjoint groupings from which they arise, assuming zero covariance and a mild assumption on the repacking step.

\begin{theorem} (Restatement of Theorem~\ref{thm:repacking-reduces-variance}.)
\label{thm:monotonicity-zero-cov}
Let $\mathcal{G}$ be a grouping of a Hamiltonian $H$ and let $\mathcal{R}$ be a repacking of $\mathcal{G}$.  Assume that for all distinct Pauli operators $P_i, P_k$ in $\mathrm{supp}(H)$, $\sigma_{P_i P_k} = 0.$
Fix a shot allocation $\{M_j\}_{j=1}^m$ over the groups such that group $G^{[j]}$ and its corresponding repacked group $G'^{[j]}$ are measured $M_j$ times.  Using a shot-weighted averaging for the energy estimators $\ovl{E}_{\Gcal}$ and $\ovl{E}_{\Rcal}$, we have that 
\begin{equation}
    \Vr{\ovl{E}_{\Rcal}} < \Vr{\ovl{E}_{\Gcal}}
\end{equation}
if at least one refinement step adds a Pauli operator $P_s$ with $\sigma^2_{P_s} > 0$.
\end{theorem}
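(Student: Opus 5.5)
The plan is to compute both variances in closed form under the zero-covariance hypothesis and compare them term by term.

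\textbf{Step 1: reduce to per-term variances.} Since $\sigma_{P_iP_k}=0$ for all distinct $P_i,P_k\in\mathrm{supp}(H)$, every within-group covariance $\mathrm{Cov}(\overline{\langle P_i\rangle}_{(j)},\overline{\langle P_k\rangle}_{(j)})=\sigma_{P_iP_k}/M_j$ vanishes. Shot batches of different groups are independent, so all cross terms in \eqref{eq:variance-expr} drop out. This holds for both $\Gcal$ and $\Rcal$, leaving only the diagonal sum.

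\textbf{Step 2: evaluate each diagonal term.} With the shot-weighted weights \eqref{eqn:heuristic-estimator-for-weights}, each diagonal term becomes
\begin{equation}
\mathrm{Var}\big(\overline{\langle P_i\rangle}_{\mathcal{X}}\big)=\sum_{j\in\Gamma_{\mathcal{X}}(i)}\frac{M_j^2}{\big(\sum_{k\in\Gamma_{\mathcal{X}}(i)}M_k\big)^2}\,\frac{\sigma^2_{P_i}}{M_j}=\frac{\sigma^2_{P_i}}{\sum_{k\in\Gamma_{\mathcal{X}}(i)}M_k},
\end{equation}
for $\mathcal{X}\in\{\Gcal,\Rcal\}$. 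Hence
\begin{equation}
\Vr{\ovl{E}_{\mathcal{X}}}=\sum_i \frac{c_i^2\,\sigma^2_{P_i}}{T_{\mathcal{X}}(i)},\qquad T_{\mathcal{X}}(i):=\sum_{k\in\Gamma_{\mathcal{X}}(i)}M_k .
\end{equation}
This is just the effective-shot-count picture from Lemma~\ref{lem:oracle-reduces-to-averaging}.

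\textbf{Step 3: compare term by term.} Because $G^{[j]}\subseteq G'^{[j]}$ for every $j$, we have $\Gamma_{\Gcal}(i)\subseteq\Gamma_{\Rcal}(i)$. All $M_j>0$, so $T_{\Rcal}(i)\ge T_{\Gcal}(i)>0$, and each summand for $\Rcal$ is at most the corresponding summand for $\Gcal$. This gives the weak inequality.

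For strictness, consider the hypothesized refinement step that adds $P_s$ to some group $G'^{[j_0]}$ with $P_s\notin G^{[j_0]}$. Then $T_{\Rcal}(s)\ge T_{\Gcal}(s)+M_{j_0}>T_{\Gcal}(s)$. Since $\sigma^2_{P_s}>0$ and $c_s\neq0$ (as $P_s\in\mathrm{supp}(H)$), the $s$-th summand strictly decreases, and the total variance strictly decreases.

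\textbf{Expected obstacle.} The main care point is bookkeeping rather than difficulty. One should check that a ``refinement step adding $P_s$'' genuinely enlarges $\Gamma(s)$ relative to the original grouping $\Gcal$. It could otherwise be read as adding $P_s$ in one step and relative to an intermediate repacking. Monotonicity of $T(s)$ along the chain $\Gcal\subseteq\cdots\subseteq\Rcal$ handles this, since group memberships are never removed. One should also state explicitly that $M_j>0$ and $c_s\ne 0$, so the strict decrease is not vacuous.
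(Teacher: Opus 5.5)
Your proof is correct, and it reaches the result by a somewhat different route than the paper.

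The paper never writes down the two variances globally. It first proves the one-step Lemma~\ref{lem:variance-change-one-step}, which gives the exact change in variance when a single Pauli $P_s$ is inserted into a single group $G^{[\ell]}$, for arbitrary covariances. Under zero covariance this change reduces to $c_s^2\sigma^2_{P_s}M_\ell/\bigl(\alpha_s(\alpha_s+M_\ell)\bigr)\ge 0$ (Corollary~\ref{cor:monotonicity-shot-weighted}). The theorem then follows by chaining such single insertions from $\Gcal$ to $\Rcal$.

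You instead compare $\Gcal$ and $\Rcal$ directly, using the closed form $\sum_i c_i^2\sigma^2_{P_i}/T_{\mathcal{X}}(i)$ and the pointwise inclusion $\Gamma_{\Gcal}(i)\subseteq\Gamma_{\Rcal}(i)$. In effect this is the telescoped sum of the paper's one-step differences. Your version buys several things:
\begin{itemize}
\item It is shorter and self-contained.
\item It needs no decomposition of $\Rcal$ into a chain of intermediate repackings.
\item It gives the total reduction explicitly as $\sum_i c_i^2\sigma^2_{P_i}\bigl(T_{\Gcal}(i)^{-1}-T_{\Rcal}(i)^{-1}\bigr)$.
\item It makes clear that the strictness condition is exactly ``some $P_s$ with $\sigma^2_{P_s}>0$ satisfies $\Gamma_{\Rcal}(s)\supsetneq\Gamma_{\Gcal}(s)$,'' independent of which sequence of refinement steps one has in mind.
\end{itemize}

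The paper's route costs more machinery but buys more in the general setting. Its one-step lemma is an if-and-only-if criterion that holds with nonzero covariance. It therefore pinpoints which covariance terms could make an individual greedy insertion (as in Algorithm~\ref{alg:adhoc-repacking}) increase the variance. The zero-covariance theorem then falls out as a special case.
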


In addition to this main result, we also prove additional results. First, we show that the variance of an overlapped grouping found by repacking never increases the variance of the initial disjoint grouping. This result does not make any assumptions about covariance, and so can be thought of as a generalized Theorem~\ref{thm:monotonicity-zero-cov}.

\begin{theorem}
\label{thm:repacking-never-increases-variance-with-covariance}
Let $\mathcal{G}$ be a grouping of a Hamiltonian $H$ and let $\mathcal{R}$ be a repacking of $\mathcal{G}$.  Fix a shot allocation $\{M_j\}_{j=1}^m$ over the groups such that group $G^{[j]}$ and its corresponding repacked group $G'^{[j]}$ are measured $M_j$ times.  Assume that for any choice of weights $w$ the variances $\text{Var}(\overline{E}_\mathcal{G}(w))$ and $\text{Var}(\overline{E}_\mathcal{R}(w))$ (Eqn.~\eqref{eq:variance-expr}) can be evaluated exactly, for example because the true within-group covariance matrix of all measured Pauli outcomes is known.  Define
\begin{equation}
    \text{Var}^*(\mathcal{G}) = \min_w \text{Var}(\overline{E}_\mathcal{G}(w))
\end{equation}
and 
\begin{equation}
    \text{Var}^*(\mathcal{R}) = \min_w \text{Var}(\overline{E}_\mathcal{R}(w))
\end{equation}
Then,
\begin{equation}
    \text{Var}^*(\mathcal{R}) \leq \text{Var}^*(\mathcal{G}).
\end{equation}
\end{theorem}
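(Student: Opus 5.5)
The plan is an embedding (feasible-set inclusion) argument. The estimator family for $\mathcal{G}$ should sit inside the estimator family for $\mathcal{R}$ with identical variance. Once that is shown, minimizing over a larger set can only give a smaller or equal value.

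First I fix notation. Since shot allocation is shared, group $j$ of $\mathcal{G}$ and group $j$ of $\mathcal{R}$ are measured in the same number $M_j$ of independent shot batches. For each Pauli $P_i$, $\Gamma_\mathcal{G}(i)\subseteq\Gamma_\mathcal{R}(i)$, because $G^{[j]}\subseteq G'^{[j]}$ for all $j$. Given any feasible weight vector $w$ for $\mathcal{G}$, satisfying the unbiasedness constraint~\eqref{eqn:unbiasedness-constraint-energy-estimator-weights} on $\Gamma_\mathcal{G}(i)$, I define the extension $\tilde w$ on $\mathcal{R}$ by
\[
\tilde w_{i,j}=w_{i,j} \text{ for } j\in\Gamma_\mathcal{G}(i), \qquad \tilde w_{i,j}=0 \text{ for } j\in\Gamma_\mathcal{R}(i)\setminus\Gamma_\mathcal{G}(i).
\]
Then $\sum_{j\in\Gamma_\mathcal{R}(i)}\tilde w_{i,j}=1$, so $\tilde w$ is feasible for $\mathcal{R}$.

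Second, I show $\mathrm{Var}(\overline{E}_\mathcal{R}(\tilde w))=\mathrm{Var}(\overline{E}_\mathcal{G}(w))$. The cleanest way is to note that the two estimators are identical random variables, not merely equal in variance. I would make this explicit by using one realization of the measurement experiment. When the shared diagonalizing basis for $G'^{[j]}$ is used, the per-group empirical estimates $\overline{\langle P_i\rangle}_{(j)}$ for $P_i\in G^{[j]}$ have the same joint distribution as when measuring $G^{[j]}$ alone. Within a group, any commuting set measured in a common eigenbasis on the same state yields the same joint outcome distribution for the observables of $G^{[j]}$. So I would either couple the experiments or, more simply, check term by term in~\eqref{eq:variance-expr}. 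The variance terms involve only $\tilde w_{i,j}^2\,\sigma^2_{P_i}/M_j$. The covariance terms sum over $j\in\Gamma_\mathcal{R}(i)\cap\Gamma_\mathcal{R}(k)$ of $\tilde w_{i,j}\tilde w_{k,j}\,\sigma_{P_iP_k}/M_j$, and every index outside $\Gamma_\mathcal{G}(i)\cap\Gamma_\mathcal{G}(k)$ carries a zero weight factor. Either way both sums reduce exactly to those for $\mathcal{G}$ with weights $w$.

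Third, I conclude. Let $w^\star$ attain (or approach) $\mathrm{Var}^*(\mathcal{G})$. This is a convex quadratic over an affine set, so the minimum is attained. Then
\[
\mathrm{Var}^*(\mathcal{R})\le \mathrm{Var}(\overline{E}_\mathcal{R}(\tilde w^\star))=\mathrm{Var}^*(\mathcal{G}).
\]

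The main obstacle is the second step. I need to justify that the within-group covariance $\mathrm{Cov}(\overline{\langle P_i\rangle}_{(j)},\overline{\langle P_k\rangle}_{(j)})=\sigma_{P_iP_k}/M_j$ is basis-independent, so that enlarging a group does not alter the statistics of the original members. I would argue this from the fact that $P_i$ and $P_k$ commute, so that $\langle P_iP_k\rangle$ is determined by the state alone. The assumption that variances are exactly evaluable then makes the term-by-term comparison rigorous.
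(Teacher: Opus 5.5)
Your proposal is correct and follows the paper's proof essentially verbatim. You extend any feasible $w$ by zero on $\Gamma_\mathcal{R}(i)\setminus\Gamma_\mathcal{G}(i)$, check that unbiasedness and the variance are preserved, and conclude because minimizing over a superset cannot increase the minimum. Your extra remarks make explicit two points the paper leaves implicit: that the within-group statistics of the original members do not depend on the chosen common eigenbasis, and that the minimum is attained.
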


\begin{proof}
    Because $G^{[i]} \subseteq G'^{[i]} \quad \forall i$, $\Gamma_\mathcal{G}(i) \subseteq \Gamma_\mathcal{R}(i)$.  Take any unbiased estimator $\overline{E}_\mathcal{G}(w)$ defined by weights $w_{i,j}$ with $j \in \Gamma_{\mathcal{G}}(i)$.  Define extended weights
    \begin{equation}
        \tilde w_{i,j} =
\begin{cases}
w_{i,j}, & j \in \Gamma_{\mathcal{G}}(i),\\
0,       & j \in \Gamma_{\mathcal{R}}(i)\setminus \Gamma_{\mathcal{G}}(i).
\end{cases}
    \end{equation}
The weights still satisfy the unbiasedness constraint~\eqref{eqn:unbiasedness-constraint-energy-estimator-weights} so $\overline{E}_{\mathcal{R}}(\tilde{w})$ is a valid unbiased estimator~\eqref{eq:generalEnergyEstimator}.  This estimator has identical variance to the original grouping as all repacked terms have no weight.  Therefore, every estimator achievable under $\mathcal{G}$ is achievable under $\mathcal{R}$.  Minimizing over a superset cannot increase the minimum, therefore $\text{Var}^*(\mathcal{R}) \leq \text{Var}^*(\mathcal{G})$
\end{proof}

We remark that this result endows overlapped groupings with a partial order structure, and was used to complete the proof of Theorem~\ref{thm:maximal-variance-reduction-restatement} in the previous subsection.

Now, we find necessary and sufficient conditions that allows refinement of a repacked grouping under the assumption of shot-weighted averaging scheme, by changing any of the existing groups by adding a single Pauli operator. Again let $\Gcal$ be a grouping and $\Rcal= \{G^{[1]},\dots,G^{[m]}\}$ is a repacked grouping of $\Gcal$. Suppose the measurement budgets allocated to group $G^{[j]}$ is given by a positive integer $M_j$, and we assume $M_1 + \dots + M_m = M$. Then the expression for the variance of the estimated energy $\ovl{E}_\Rcal$ from Eq.~\eqref{eq:variance-expr}, under the shot-weighted averaging scheme becomes
\begin{equation}
\label{eq:var-formula-repack-shot-wt}
\begin{split}
& \Vr{\ovl{E}_\Rcal} = \sum_{i=1}^{N} \frac{c_i^2 \sigma^2_{P_i}}{ \left( \sum_{j \in \Gamma_{\Rcal}(i)} M_j\right)} \\
&+ 2 \sum_{\substack{i,k=1 \\ i < k}}^{N}  \frac{c_i c_k \sigma_{P_i P_k} \left( \sum_{j \in \Gamma_{\Rcal}(i) \cap \Gamma_{\Rcal}(k)} M_j\right) }{\left( \sum_{j \in \Gamma_{\Rcal}(i)} M_j\right) \left( \sum_{j \in \Gamma_{\Rcal}(k)} M_j\right)}.
\end{split}
\end{equation}
We can then prove the following.

\begin{lemma}[One-step refinement]
\label{lem:variance-change-one-step}
Suppose $\Rcal' = \{G'^{[1]},\dots,G'^{[m]}\}$ is a refinement of $\Rcal$. Assume that there exists $1 \le \ell \le m$ such that (i) $G'^{[j]} = G^{[j]}$ for all $j \ne \ell$, and (ii) $G'^{[\ell]} \setminus G^{[\ell]} = \{P_s\}$, for some $P_s \in \text{supp}(H)$. Define the integer-valued quantities
\begin{equation}
\begin{split}
    \alpha_t = \sum_{j \in \Gamma_{\Rcal}(t)} M_j, \qquad \alpha_{t,t'} = \sum_{j \in \Gamma_{\Rcal}(t) \cap \Gamma_{\Rcal}(t')} M_j,
\end{split}
\end{equation}
for all $t,t' = 1,\dots,N$, and also define the sets $\mathcal{Q}_\ell = \{j : P_j \in G^{[\ell]}\}$ and $\mathcal{Q}_\ell^\complement = \{1,\dots,N\} \setminus \mathcal{Q}_\ell$. Then for the shot-weighted averaging scheme, we have $\Vr{\ovl{E}_{\Rcal'}} ]\le \Vr{\ovl{E}_{\Rcal}}$, when each group is measured with the same measurement budgets $M_1, \dots, M_m$, if and only if
\begin{equation}
\label{eq:variance-change-one-step-2}
\begin{split}
\frac{1}{2} c_s^2 \sigma^2_{P_s}  & \ge  \sum_{i \in \mathcal{Q}_\ell} c_i c_s \sigma_{P_i P_s} \left( \frac{\alpha_s - \alpha_{i,s}}{\alpha_i} \right) \\
& - \sum_{s \neq i \in \mathcal{Q}^\complement_\ell} c_i c_s \sigma_{P_i P_s}   \left( \frac{\alpha_{i,s}}{\alpha_i } \right).
\end{split}
\end{equation}
We have $\Vr{\ovl{E}_{\Rcal'}} < \Vr{\ovl{E}_{\Rcal}}$ if and only if the inequality \eqref{eq:variance-change-one-step-2} is strict.
\end{lemma}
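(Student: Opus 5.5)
The plan is a direct computation: write both variances using the closed form \eqref{eq:var-formula-repack-shot-wt}, subtract, and factor out a positive quantity. The first step is to record exactly which shot counts change when we pass from $\Rcal$ to $\Rcal'$.

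Since $\Rcal'$ differs from $\Rcal$ only by inserting $P_s$ into group $\ell$, we have $\Gamma_{\Rcal'}(s) = \Gamma_{\Rcal}(s) \cup \{\ell\}$, with $\ell \notin \Gamma_\Rcal(s)$, and $\Gamma_{\Rcal'}(t) = \Gamma_{\Rcal}(t)$ for every $t \neq s$. Writing $\delta := M_\ell > 0$, the quantities entering \eqref{eq:var-formula-repack-shot-wt} update as follows:
\begin{itemize}
\item $\alpha_s \mapsto \alpha_s + \delta$;
\item $\alpha_t$ is unchanged for all $t \neq s$;
\item $\alpha_{i,s} \mapsto \alpha_{i,s} + \delta$ exactly when $\ell \in \Gamma_\Rcal(i)$, i.e. for $i \in \mathcal{Q}_\ell$ (here $s \notin \mathcal{Q}_\ell$ because $P_s \notin G^{[\ell]}$);
\item $\alpha_{i,s}$ is unchanged for $i \in \mathcal{Q}_\ell^\complement \setminus \{s\}$;
\item every $\alpha_{t,t'}$ with $s \notin \{t,t'\}$ is unchanged.
\end{itemize}
Consequently, every term of \eqref{eq:var-formula-repack-shot-wt} that involves neither index $s$ cancels in $\Vr{\ovl{E}_{\Rcal'}} - \Vr{\ovl{E}_{\Rcal}}$. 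What remains is the diagonal term $c_s^2\sigma^2_{P_s}/\alpha_s$ together with the cross terms $2c_ic_s\sigma_{P_iP_s}\,\alpha_{i,s}/(\alpha_i\alpha_s)$ for $i \neq s$. The covering property guarantees $\alpha_t \ge \min_j M_j > 0$, so all denominators are positive.

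The second step computes the three kinds of differences.
\begin{itemize}
\item Diagonal term: $c_s^2\sigma^2_{P_s}\bigl(\tfrac{1}{\alpha_s+\delta} - \tfrac{1}{\alpha_s}\bigr) = -c_s^2\sigma^2_{P_s}\,\delta/\bigl(\alpha_s(\alpha_s+\delta)\bigr)$.
\item Cross terms with $i \in \mathcal{Q}_\ell$: $\tfrac{\alpha_{i,s}+\delta}{\alpha_s+\delta} - \tfrac{\alpha_{i,s}}{\alpha_s} = \delta(\alpha_s - \alpha_{i,s})/\bigl(\alpha_s(\alpha_s+\delta)\bigr)$, multiplied by $2c_ic_s\sigma_{P_iP_s}/\alpha_i$.
\item Cross terms with $i \in \mathcal{Q}_\ell^\complement \setminus\{s\}$: $\alpha_{i,s}\bigl(\tfrac{1}{\alpha_s+\delta}-\tfrac{1}{\alpha_s}\bigr) = -\delta\alpha_{i,s}/\bigl(\alpha_s(\alpha_s+\delta)\bigr)$, multiplied by $2c_ic_s\sigma_{P_iP_s}/\alpha_i$.
\end{itemize}
Every contribution therefore carries the common factor $2\delta/\bigl(\alpha_s(\alpha_s+\delta)\bigr) > 0$. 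Factoring it out gives
\[
\Vr{\ovl{E}_{\Rcal'}} - \Vr{\ovl{E}_{\Rcal}} = \frac{2\delta}{\alpha_s(\alpha_s+\delta)}\Bigl[-\tfrac12 c_s^2\sigma^2_{P_s} + \sum_{i\in\mathcal{Q}_\ell} c_ic_s\sigma_{P_iP_s}\tfrac{\alpha_s-\alpha_{i,s}}{\alpha_i} - \sum_{s\ne i\in\mathcal{Q}_\ell^\complement} c_ic_s\sigma_{P_iP_s}\tfrac{\alpha_{i,s}}{\alpha_i}\Bigr].
\]

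The final step reads off the sign. The difference is $\le 0$ if and only if the bracket is $\le 0$, which is exactly \eqref{eq:variance-change-one-step-2}. It is $<0$ if and only if the bracket is strictly negative, which is the strict version of \eqref{eq:variance-change-one-step-2}.

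The main obstacle is not analytic but bookkeeping. One must correctly identify which pairwise overlaps $\alpha_{i,s}$ gain $M_\ell$: precisely the $i$ with $P_i \in G^{[\ell]}$ in the current repacking $\Rcal$, not merely the original grouping. One must also confirm that no term with both indices different from $s$ changes. Once that case split is fixed, the algebra is routine.
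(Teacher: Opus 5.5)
Your proposal is correct and takes essentially the same route as the paper. Both subtract the two instances of the closed-form variance \eqref{eq:var-formula-repack-shot-wt}, split the cross terms involving $s$ over $\mathcal{Q}_\ell$ and $\mathcal{Q}_\ell^\complement$ according to whether $\alpha_{i,s}$ gains $M_\ell$, and read off the sign after pulling out the positive factor $M_\ell/\bigl(\alpha_s(\alpha_s+M_\ell)\bigr)$. The only difference is cosmetic: you compute $\Vr{\ovl{E}_{\Rcal'}}-\Vr{\ovl{E}_{\Rcal}}$ rather than the reverse, and you factor the common prefactor out explicitly.
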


\begin{proof}
Let us define $\delta_{i,\ell} = 1$ if $P_i \in G^{[\ell]}$, otherwise $0$, for all $i=1,\dots,N$. Then using Eq.~\eqref{eq:var-formula-repack-shot-wt} we have
\begin{equation}
\label{eq:variance-change-one-step-1}
\begin{split}
&\Vr{\ovl{E}_{\Rcal}} - \Vr{\ovl{E}_{\Rcal'}} = \frac{c_s^2 \sigma^2_{P_s} M_\ell}{\alpha_s (\alpha_s + M_\ell)} \\
&+ 2 \sum_{\substack{i=1 \\ i \neq s}}^{N} c_i c_s \sigma_{P_i P_s} \left( \frac{\alpha_{i,s}}{\alpha_i \alpha_s} - \frac{\alpha_{i,s} + M_\ell \delta_{i,\ell}}{\alpha_i (\alpha_s + M_\ell)} \right).
\end{split}
\end{equation}
Breaking up the second term on the right hand side over the sets $\mathcal{Q}_\ell$ and $\mathcal{Q}^\complement_\ell$, and then simplifying the terms yield
\begin{equation*}
\begin{split}
&\Vr{\ovl{E}_{\Rcal}} - \Vr{\ovl{E}_{\Rcal'}} = \frac{c_s^2 \sigma^2_{P_s} M_\ell}{\alpha_s (\alpha_s + M_\ell)} \\
&+ 2 \sum_{s \neq i \in \mathcal{Q}^\complement_\ell} \frac{c_i c_s \sigma_{P_i P_s}  \alpha_{i,s} M_\ell}{\alpha_i \alpha_s (\alpha_s + M_\ell)} \\
&+ 2 \sum_{i \in \mathcal{Q}_\ell} \frac{c_i c_s \sigma_{P_i P_s} M_\ell (\alpha_{i,s} - \alpha_s)}{\alpha_i \alpha_s (\alpha_s + M_\ell)}.
\end{split}
\end{equation*}
Using $\alpha_s, M_\ell > 0$, we immediately obtain from the above equation that $\Vr{\ovl{E}_{\Rcal'}} \le \Vr{\ovl{E}_{\Rcal}}$ if and only if the inequality in Eq.~\eqref{eq:variance-change-one-step-2} holds, and $\Vr{\ovl{E}_{\Rcal'}} < \Vr{\ovl{E}_{\Rcal}}$ if and only if the inequality is strict.
\end{proof}

Now, from Eqn.~\eqref{eq:variance-change-one-step-1} in the proof of Lemma~\ref{lem:variance-change-one-step}, we immediately see the following.

\begin{corollary}
\label{cor:monotonicity-shot-weighted}
In the same setting as Lemma~\ref{lem:variance-change-one-step}, suppose that we additionally have $\sigma_{P_i P_s} = 0$, for all $1 \le i \le N$ with $i \ne s$. Then, $\Vr{\ovl{E}_{\Rcal'}} \le \Vr{\ovl{E}_{\Rcal}}$, and $\Vr{\ovl{E}_{\Rcal'}} < \Vr{\ovl{E}_{\Rcal}}$ if and only if $\sigma^2_{P_s} > 0$.
\end{corollary}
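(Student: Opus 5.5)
Under the extra hypothesis $\sigma_{P_i P_s}=0$ for all $i\neq s$, I will prove the corollary by reading it off the exact identity~\eqref{eq:variance-change-one-step-1} from the proof of Lemma~\ref{lem:variance-change-one-step}. That identity gives $\Vr{\ovl{E}_{\Rcal}} - \Vr{\ovl{E}_{\Rcal'}}$ as a diagonal term plus a sum over $i\neq s$. Every summand of that sum carries the factor $c_i c_s \sigma_{P_i P_s}$, so each one vanishes identically, whatever the values of $\alpha_i$, $\alpha_{i,s}$ and $\delta_{i,\ell}$. Only the diagonal contribution survives:
\begin{equation*}
\Vr{\ovl{E}_{\Rcal}} - \Vr{\ovl{E}_{\Rcal'}} = \frac{c_s^2 \sigma^2_{P_s} M_\ell}{\alpha_s(\alpha_s+M_\ell)}.
\end{equation*}

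Next I will check the sign of each factor. Because $\Rcal$ is a repacked grouping, it covers $\mathrm{supp}(H)$, so $\Gamma_\Rcal(s)\neq\emptyset$. All budgets $M_j$ are positive integers, hence $\alpha_s=\sum_{j\in\Gamma_\Rcal(s)}M_j>0$, and $M_\ell>0$ as well. Since $P_s\in\mathrm{supp}(H)$, we have $c_s\neq 0$, so $c_s^2>0$. Finally, $\sigma^2_{P_s}=1-\langle P_s\rangle^2\ge 0$. The right-hand side is therefore a product of strictly positive factors with $\sigma^2_{P_s}\ge 0$, which gives $\Vr{\ovl{E}_{\Rcal'}}\le\Vr{\ovl{E}_{\Rcal}}$. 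It vanishes exactly when $\sigma^2_{P_s}=0$, which gives the strict-inequality characterization.

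As a consistency check, I could instead specialize the criterion~\eqref{eq:variance-change-one-step-2}: its right-hand side becomes $0$, so it reduces to $\tfrac12 c_s^2\sigma^2_{P_s}\ge 0$, strict iff $\sigma^2_{P_s}>0$. This agrees with the direct argument.

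There is no real obstacle. The only points needing care are that the hypothesis kills all cross terms, including those with $i\in\mathcal{Q}_\ell$ whose coefficient involves $\delta_{i,\ell}$, and that $\alpha_s>0$ follows from the covering property together with $M_j>0$.
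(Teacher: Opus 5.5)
Your proposal is correct and takes the same route as the paper, which likewise reads the corollary directly off identity~\eqref{eq:variance-change-one-step-1}: the hypothesis kills every cross term and leaves only the diagonal term $c_s^2\sigma^2_{P_s}M_\ell/\bigl(\alpha_s(\alpha_s+M_\ell)\bigr)$. Your explicit checks that $\alpha_s>0$ (from the covering property and $M_j>0$) and that $c_s\neq 0$ (from $P_s\in\mathrm{supp}(H)$) are exactly what the strict-inequality equivalence requires, and the paper leaves them implicit.
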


The key point about Corollary~\ref{cor:monotonicity-shot-weighted} is that it holds irrespective of what fixed shot budgets $M_1, \dots, M_m$ are used. One should contrast this with Theorem~\ref{thm:repacking-never-increases-variance-with-covariance}, where monotonicity is established for optimal weights, but without the zero covariance assumption which is utilized to derive the heuristic estimator. If we assume zero covariance for all pairs of Pauli operators in $\text{supp}(H)$, by repeated application of Corollary~\ref{cor:monotonicity-shot-weighted}, we obtain Theorem~\ref{thm:monotonicity-zero-cov}.

\subsection{Numerical results} \label{sec:numerical-results}

\begin{figure*}
    \centering
    \includegraphics[width=\textwidth]{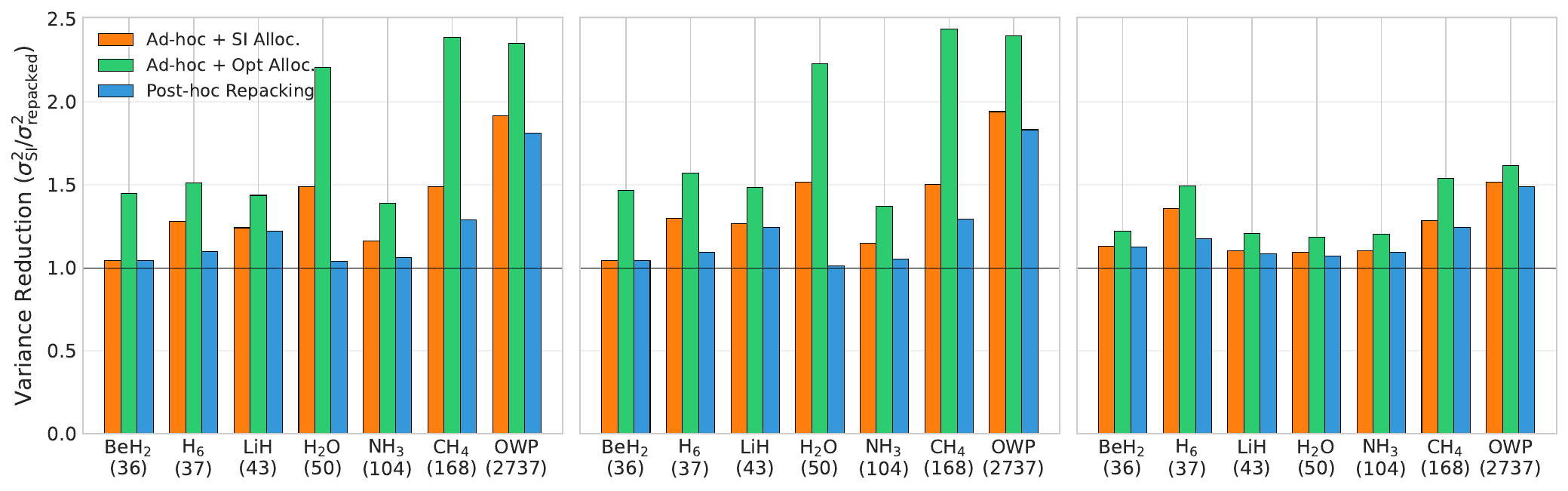}
    \caption{
        Variance reduction of overlapped groupings found through repacking relative to sorted insertion for molecular Hamiltonians. From left to right, variance is computed with respect to an approximate ground state found by DMRG with a fixed bond dimension $\chi = 64$, the Hartree-Fock state, and a random product state. For each state, we compute the variance reduction uniform allocation strategies for both sorted insertion and ad-hoc and post-hoc repacking schemes. The post-hoc strategy is described in~\eqref{eq: allocation} while the ad-hoc allocation is determined by solving the optimization problem described in~\eqref{eq: optallocation}. The horizontal axis shows molecule names (OWP = one water phosphate/metaphosphate) and, in parentheses, the number of groups found by sorted insertion. As can be seen, overlapped groupings always reduce variance. The largest variance reduction is for the largest metaphosphate molecular system, and generally the variance reduction increases for larger problems.
    }
    \label{fig:chem_variance}
\end{figure*}

In the previous section, we have analytically demonstrated the maximal variance reduction possible by overlapped grouping, and have demonstrated that overlapped groupings found by repacking never increase variance, and strictly decrease variance under mild assumptions. In this section, we are interested in evaluating the performance of overlapped groupings, in particular those found by repacking, on typical problems. We take these problems to be molecular Hamiltonians, which are commonly used to benchmark both algorithms and measurement strategies~\cite{crawfordEfficientQuantumMeasurement2021,sawayaHamLibLibraryHamiltonians2024,Sawaya_Camps_DalFavero_Tubman_Rotskoff_LaRose_2025,dalfaveroMeasurementReductionExpectation2025,choiGhost2022,choiFluidFermionicFragments2023,hadfieldMeasurementsQuantumHamiltonians2022}. Additionally, we use a model of random Hamiltonians, to be defined, to further study the practical scaling of overlapped grouping methods.

First, we consider molecular Hamiltonians to identify variance reductions in practically relevant systems. We select five common benchmark Hamiltonians and a metaphosphate model recently explored as a surrogate model for ATP hydrolysis in biochemistry~\cite{laroseCostQuantumAlgorithms2026}. This metaphosphate model acts on $44$ qubits and has $575 \cdot 10^3$ Pauli terms. For each Hamiltonian, we compute the variance without grouping, the variance with grouping found by sorted insertion, and the variance with overlapped grouping found by repacking (starting from the sorted insertion grouping). We compute each variance with respect to three states: an approximate ground state found by DRMG with a small bond dimension, the Hartree-Fock state, and a random product state. The results are shown in Fig. \ref{fig:chem_variance}. As can be seen, variance is always  reduced relative to sorted insertion, with the largest reduction coming from the largest metaphosphate molecule. As expected, the ad-hoc repacking algorithm almost always produces a larger variance reduction. This is expected because there is more flexibility when repacking, due to allowing a change of basis, than with post-hoc repacking where the basis is fixed. Interestingly, however, we see that post-hoc repacking still strictly reduces variance. We emphasize that this post-hoc overlapped grouping strategy can be performed \textit{after} samples are experimentally collected for (disjoint) grouping with no additional quantum experimental overhead, and in this sense provide somewhat of a free lunch. Additionally, we see the general trend that the reduction tends to grow with the problem size. Note that in Fig.~\ref{fig:chem_variance} the post-hoc strategy is described in~\eqref{eq: allocation} while the ad-hoc allocation is determined by solving the optimization problem described in~\eqref{eq: optallocation}.  While the post-hoc strategy can also be optimized similarly to the ad-hoc strategy, its intended use is when the shots are already taken under the sorted insertion procedure, so we follow the original allocation strategy. 

To further explore the observed behavior that the variance reduction from overlapped grouping generally increases with problem size, we consider a family of random Hamiltonians which we can easily scale. These Hamiltonians are generated by selecting $10\%$ of all Pauli strings on $n$ qubits at random and assigning them a coefficient uniformly at random from $[-1,1]$.  We compute the variance reduction with respect to a random product state. The variance reduction of (repacked) overlapped grouping relative to sorted insertion is shown in Fig.~\ref{fig:random_variance}.  Here, we again see a variance reduction which scales with problem size, and the empirical scaling for the range of problem sizes we explored is super-linear in the number of qubits.  Note that since the number of groups also grow super-linearly, this does not contradict Theorem~\ref{thm:maximal-variance-reduction}. Beyond the two families of Hamiltonians we explored here, we expect that many other Hamiltonians will exhibit this behavior. We thus expect that overlapped grouping will become more impactful for large problem sizes, in particular for Hamiltonians expected to be able to be explored on  Megaquop computers~\cite{Preskill_2025}.

\begin{figure}
    \centering
    \includegraphics[width=\columnwidth]{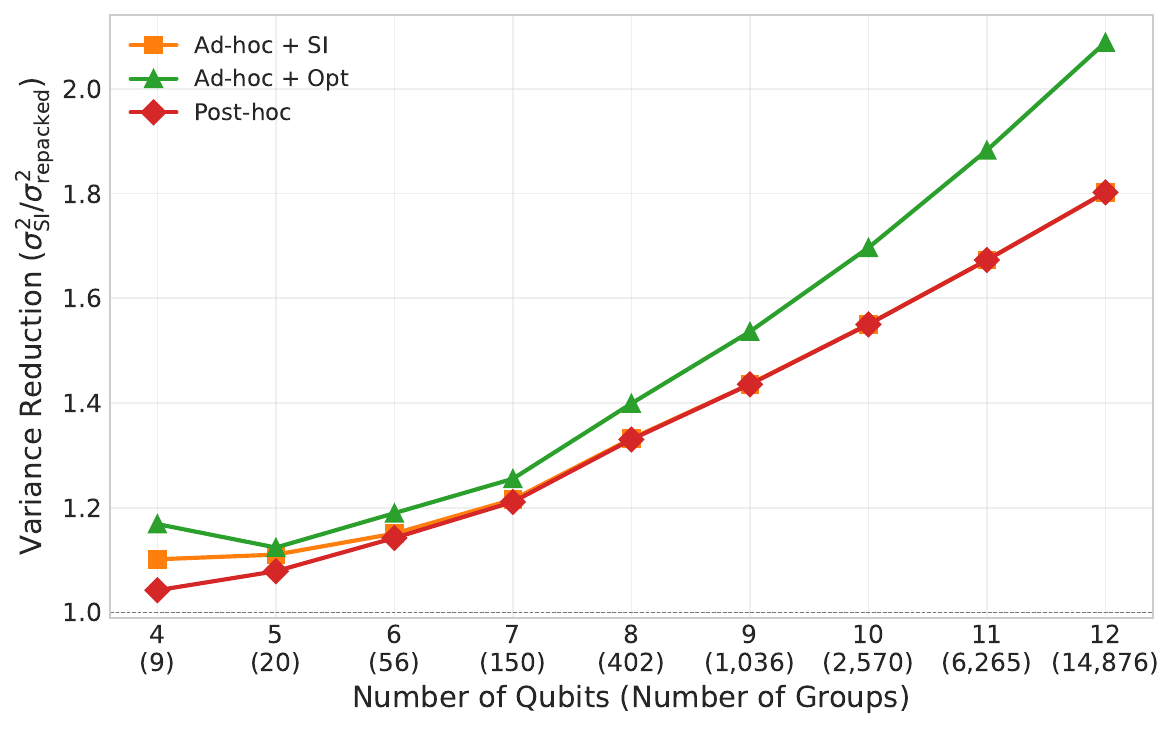}
    \caption{
        Variance reduction relative to sorted insertion with state-independent optimal allocation for random Hamiltonians as a function of qubit number.  The post-hoc method utilizes the same allocation as sorted insertion, as does the line Ad-hoc + sorted insertion.  We also consider ad-hoc repacking with convex optimized state-independent allocation which shows the best variance reduction across qubit number. Random Hamiltonians are generated by selecting $10\%$ of all Pauli strings at random and assigning them a coefficient selected uniformly at random from $[-1,1]$. The horizontal axis shows the number of qubits and the number of groups in parentheses. The largest Hamiltonian on $12$ qubits has 1.68M terms.
    }
    \label{fig:random_variance}
\end{figure}

\section{Discussion}

In Sec.~\ref{sec:results}, we showed the maximal variance reduction possible from overlapped grouping is $\Theta(m)$ relative to sorted insertion with $m$ disjoint groups, and presented an explicit repacking algorithm to achieve this. We additionally showed that repacking never increases variance, and decreases variance under mild assumptions on the Hamiltonian. These findings were corroborated by numerical experiments with electronic structure and random Hamiltonians.

In these results, we have assumed zero covariance. While this assumption has been used in prior work when constructing energy estimators and computing variances, its validity depends on both the problem Hamiltonian and the state $|\psi\rangle$ when evaluating the variance. To illustrate a system which exhibits variance-dominated behavior, we examine a two-dimensional periodic spinless Fermi-Hubbard model mapped to qubits via the Jordan-Wigner transformation.  We consider an ensemble of random product states, and exactly evaluate the variance covariance contributions to the energy~\eqref{eqn:variance-exact}. The results, shown in Fig.~\ref{fig:hubbard_varcov}, show that the total variance contribution grows linearly in the problem size, whereas the total covariance is zero on average. In this setting, the variance contribution dominates the covariance contribution, and so the zero covariance assumption is justified.

\begin{figure}[t!]
    \centering
    \includegraphics[width=\columnwidth]{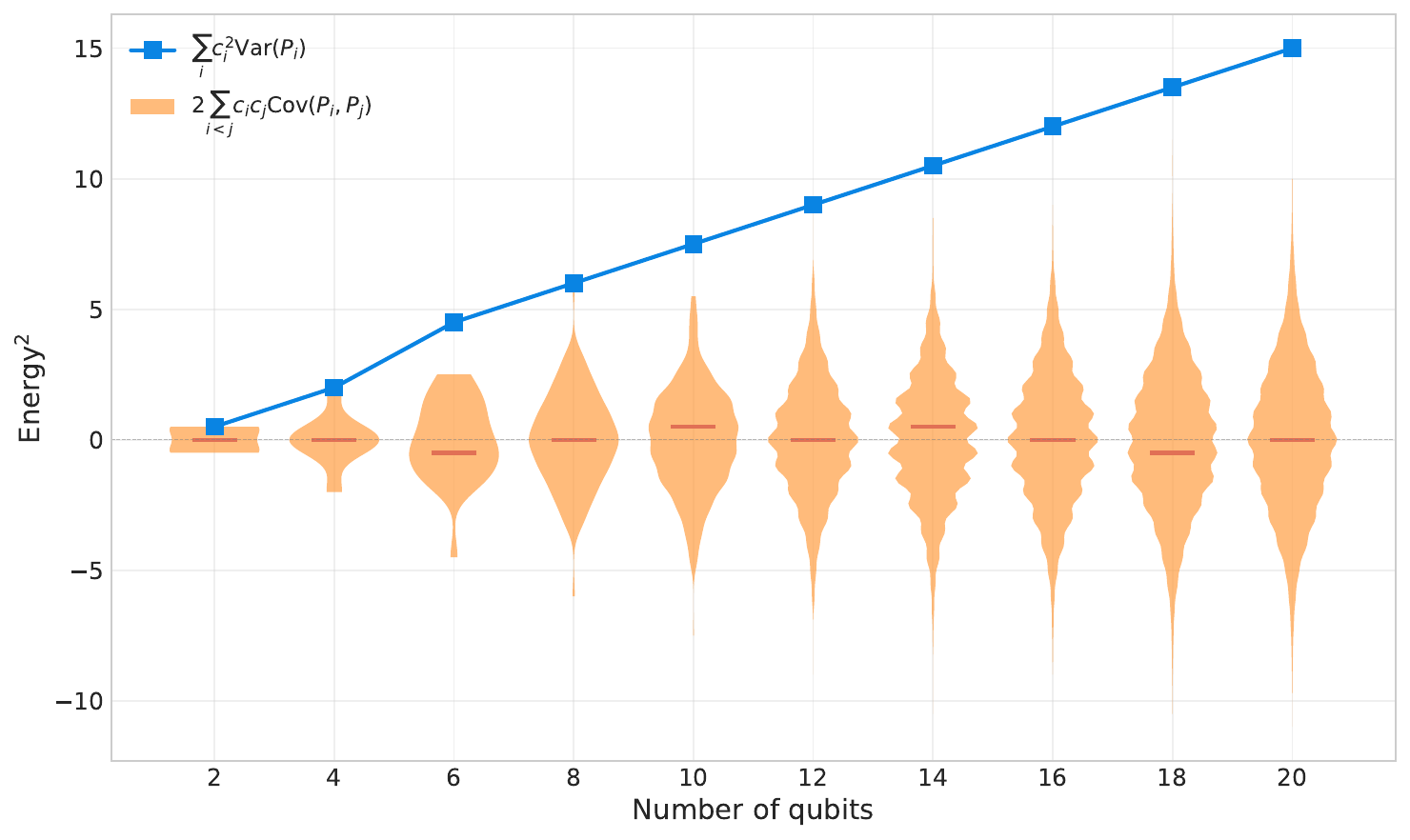}
    \caption{
        Variance (blue) vs. covariance (orange) contributions to the energy estimator for a $2 \times n$ periodic spinless Fermi-Hubbard model over an ensemble of product states.  The diagonal contribution $\sum_i c_i^2 \mathrm{Var}(P_i)$ (blue) grows linearly in the system size, while the covariance contribution $2\sum_{i<j} c_i c_j \mathrm{Cov}(P_i,P_j)$ (orange) is zero on average.  Across all system sizes, the covariance term remains smaller than the diagonal contribution, so the total variance is dominated by the diagonal component.
    }
    \label{fig:hubbard_varcov}
\end{figure}

However, it is easy to construct examples in which the opposite is true. For example, in Appendix~\ref{appendix:further-discussion-of-zero-covariance-assumption} we show that the Ising model $H_n \coloneq -\sum_{1\le i<j\le n} Z_i Z_j$ with state $|\psi\rangle = (|E_\text{min}\rangle + |E_\text{max}\rangle) / \sqrt{2}$ (where $|E_\text{min/max}\rangle$ are the min/max energy eigenstates) has a total covariance proportional to $n^3$ and a total variance proportional to $n$. In this setting, our results would not hold.

In practice, we have seen numerically (Fig.~\ref{fig:chem_variance}) that for typical electronic structure problems repacking does in fact reduce the variance of the energy estimator. This finding is also true for a model of random Hamiltonians that we studied in order to further investigate scaling (Fig.~\ref{fig:random_variance}). Finally, we remark that although we assumed zero covariance in our main results (Theorem~\ref{thm:maximal-variance-reduction-restatement} and Theorem~\ref{thm:monotonicity-zero-cov}), we remark that Theorem~\ref{thm:repacking-never-increases-variance-with-covariance} demonstrates that repacking never increases variance, without any assumptions on covariance. Again, while this is a weaker result than strictly decreasing, we have observed a strict decrease in variance in all our numerical experiments. As mentioned, the zero covariance assumption is one commonly explored in prior work. Without this assumption, even disjoint grouping methods like sorted insertion can increase the variance of the energy estimator, and this naturally extends into overlapped grouping methods as well. Two examples of this are discussed in Appendix~\ref{appendix:failure-modes-of-grouping}.

We therefore believe that overlapped groupings, in particular those found through the repacking algorithms we have introduced, will prove to be valuable techniques for energy estimation for problem sizes at the scale of Megaquop computers.

\section{Conclusion}

In this work, we proved that the maximal variance reduction possible for overlapped groupings, relative to disjoint groupings, is linear in the number of groups. Our proof introduced new algorithms, which we call repacking, for forming overlapped groupings from disjoint groupings. We have shown that these repacking algorithms never increase variance, and strictly decrease variance under mild assumptions. These findings were shown to be true numerically for electronic structure and random Hamiltonian benchmarks, in which we observed a variance reduction that is linear in the problem size.

Thus, our work characterizes the theoretical best and worst case performance of overlapped grouping methods --- an open question since the method has been suggested and numerically explored in multiple prior publications. Beyond this, we have additionally characterized the average-case performance by numerical studies on typical problems, notably for problem sizes more than twice as large as what has been studied in previous work. While many methods have been suggested for the energy estimation problem --- reflecting its timeliness and importance --- our work shows that overlapped groupings can provide a significant reduction in total measurement complexity relative to state-of-the-art methods for large-scale molecular Hamiltonians.

Several natural extensions are possible for future work. While we have started from an initial disjoint grouping to form overlapped groupings, many other methods can be explored, including those which may result in a different number of groups. Additionally, overlapped groupings provide a redundancy that may be able to be exploited for further practical benefits. Specifically, terms can be measured in multiple groups in overlapped groupings. These different groups may have different diagonalizing circuit depths, which leads to different noise strengths on quantum computers. By exploiting this redundancy, noise may be able to be mitigated through zero-noise extrapolation techniques~\cite{temmeErrorMitigationShortDepth2017,endoPracticalQuantumError2018,GiurgicaTiron_Hindy_LaRose_Mari_Zeng_2020}, leading to an ``embedded'' error mitigation property of overlapped grouping methods that is not present in disjoint groupings.

\section{Acknowledgments}

This work is supported by Wellcome Leap as part of
the Q4Bio Program. R.S was supported by the U.S. Department of Energy, Office of Science, Accelerated Research in Quantum Computing Centers, Quantum Utility through Advanced Computational Quantum Algorithms, grant no. DE-SC0025572. This work was supported in part through computational resources and services provided by the Institute for Cyber-Enabled Research (ICER) at Michigan State University.


\section*{Code and Data Availability}

Code to reproduce numerical experiments is available at~\cite{git}.





\bibliographystyle{unsrt}
\bibliography{refs}

\appendix

\section{Further discussion of zero covariance assumption} \label{appendix:further-discussion-of-zero-covariance-assumption}

Next we show that for Hamiltonians with a large spectral width, there exist states for which covariance dominates. Large spectral width is often associated with long-range order or strong global correlations, where many observables are aligned across the system.  In these regimes, a large number of Hamiltonian terms can be simultaneously extremized, leading to coherent accumulation of energy contributions and, consequently, covariance-dominated variance.

\begin{lemma}
\label{lem:LargeSpectralWidth}
Let $H=\sum_i c_i P_i$ be a Pauli Hamiltonian. For any state $\ket{\psi}$ define
\[
D(\psi) \coloneq \sum_i c_i^2\,\mathrm{Var}(P_i).
\]
Then $D(\psi)\le \|c\|_2^2$. Moreover, letting $E_{\min}$ and $E_{\max}$ denote the minimum and maximum eigenvalues of $H$ and $W\coloneq E_{\max}-E_{\min}$, there exists a state $\ket{\psi_\star}$ such that
\[
\text{Var}(H)\Big|_{\psi_\star}=\frac{W^2}{4}
\qquad\text{and}\qquad
\frac{D(\psi_\star)}{\text{Var}(H)|_{\psi_\star}} \le \frac{4\|c\|_2^2}{W^2}.
\]
\end{lemma}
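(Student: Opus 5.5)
The first claim is pointwise. Since each $P_i$ is a Pauli operator with $P_i^2=I$, we have $\mathrm{Var}(P_i) = 1-\langle P_i\rangle^2 \in [0,1]$ for every state. Summing $c_i^2\,\mathrm{Var}(P_i)\le c_i^2$ over $i$ gives $D(\psi)\le \sum_i c_i^2 = \|c\|_2^2$, and this holds uniformly in $\ket{\psi}$. No covariance assumptions enter here.

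For the second claim I will take the explicit state $\ket{\psi_\star} = (\ket{E_{\min}} + \ket{E_{\max}})/\sqrt{2}$, where $\ket{E_{\min}}$ and $\ket{E_{\max}}$ are normalized eigenvectors of $H$ for the extreme eigenvalues. These vectors are orthogonal whenever $W>0$. If $W=0$, then $H$ is a multiple of the identity, the spectral width is trivial, and the inequality should be read with the convention that $H$ is non-scalar. I will note this degenerate case and exclude it.

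Next I compute the moments of $H$ in $\ket{\psi_\star}$. Because the two components are orthogonal eigenvectors, $\langle H\rangle = (E_{\min}+E_{\max})/2$ and $\langle H^2\rangle = (E_{\min}^2+E_{\max}^2)/2$. Hence
\[
\mathrm{Var}(H)\big|_{\psi_\star} = \tfrac{1}{2}(E_{\min}^2+E_{\max}^2) - \tfrac{1}{4}(E_{\min}+E_{\max})^2 = \tfrac{1}{4}(E_{\max}-E_{\min})^2 = \tfrac{W^2}{4}.
\]
This is the two-point distribution of variance $W^2/4$; it is in fact the maximum variance any state can have, by Popoviciu's inequality, though that bound is not needed here.

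Finally, I divide the uniform bound from the first step by this variance, which is positive since $W>0$:
\[
\frac{D(\psi_\star)}{\mathrm{Var}(H)|_{\psi_\star}} \le \frac{\|c\|_2^2}{W^2/4} = \frac{4\|c\|_2^2}{W^2}.
\]
The argument has no serious obstacle. The only points needing care are the orthogonality of the two eigenvectors, which holds because they belong to distinct eigenvalues, and the degenerate case $W=0$. I also expect to remark afterward on what this means: when $W^2 \gg \|c\|_2^2$, as for the Ising example with $W\sim n^2$ and $\|c\|_2^2\sim n^2$, the diagonal part $D$ is a vanishing fraction of the total variance. The covariance term must then dominate.
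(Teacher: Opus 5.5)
Your proposal is correct and follows essentially the same route as the paper. Both arguments use the pointwise bound $\mathrm{Var}(P_i)\le 1$, take the equal superposition of the extremal eigenvectors to get variance $W^2/4$, and then divide. Your explicit remarks on the orthogonality of those eigenvectors and on the degenerate case $W=0$, where the ratio is undefined and which the paper leaves implicit, are a welcome extra bit of care.
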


\begin{proof}
Since each $P_i$ has $\pm 1$ outcomes, $\mathrm{Var}(P_i)=1-\langle P_i\rangle^2\le 1$, hence
\[
D(\psi)=\sum_i c_i^2\,\mathrm{Var}(P_i)\le \sum_i c_i^2=\|c\|_2^2.
\]
Let $\ket{E_{\min}}$ and $\ket{E_{\max}}$ be eigenstates achieving $E_{\min}$ and $E_{\max}$, and set
\[
\ket{\psi_\star}\coloneq \frac{\ket{E_{\min}}+\ket{E_{\max}}}{\sqrt{2}}.
\]
Then $\langle H\rangle_{\psi_\star}=(E_{\min}+E_{\max})/2$ and
$\langle H^2\rangle_{\psi_\star}=(E_{\min}^2+E_{\max}^2)/2$, so
\[
\text{Var}(H)\Big|_{\psi_\star}
=
\langle H^2\rangle_{\psi_\star}-\langle H\rangle_{\psi_\star}^2
=
\frac{(E_{\max}-E_{\min})^2}{4}
=
\frac{W^2}{4}.
\]
Combining with $D(\psi_\star)\le \|c\|_2^2$ gives the ratio bound.
\end{proof}

\begin{proposition}[A covariance-dominated sequence exists]
\label{prop:ising-diagonal-fraction-vanishes}
There exists a sequence of Pauli Hamiltonians $\{H_n\}$ and states $\{\ket{\psi_n}\}$ such that
\[
\frac{D_n(\psi_n)}{\mathrm{Var}(E_n)\big|_{\psi_n}}
\longrightarrow 0
\qquad\text{as } n\to\infty,
\]
where $H_n=\sum_i c_i^{(n)} P_i^{(n)}$, $D_n(\psi)\coloneq \sum_i (c_i^{(n)})^2\,\mathrm{Var}(P_i^{(n)})$, and
$E_n=\langle H_n\rangle$.
\end{proposition}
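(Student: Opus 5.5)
We will use the all-to-all Ising model $H_n \coloneq -\sum_{1\le i<j\le n} Z_i Z_j$, already mentioned in the Discussion, and apply Lemma~\ref{lem:LargeSpectralWidth} to the state $\ket{\psi_n} \coloneq (\ket{E_{\min}}+\ket{E_{\max}})/\sqrt{2}$. The Hamiltonian $H_n$ has $N_n=\binom{n}{2}$ Pauli terms, each with coefficient $c_i^{(n)}=-1$, so $\|c\|_2^2=\binom{n}{2}=\Theta(n^2)$. By Lemma~\ref{lem:LargeSpectralWidth}, it then suffices to show that the spectral width satisfies $W_n=\Theta(n^2)$. That gives
\begin{equation}
\frac{D_n(\psi_n)}{\mathrm{Var}(E_n)|_{\psi_n}} \le \frac{4\|c\|_2^2}{W_n^2} = O\!\left(\frac{n^2}{n^4}\right) \to 0 .
\end{equation}

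To compute $W_n$, we note that $H_n$ is diagonal in the computational basis. On a basis string with $k$ spins equal to $+1$, set $S=\sum_i z_i = 2k-n$. Using $\sum_{i<j} z_i z_j = (S^2-n)/2$, the eigenvalue is $-(S^2-n)/2$.

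The minimum is $E_{\min}=-\binom{n}{2}$, attained at $S=\pm n$, for example by $\ket{0\cdots0}$. The maximum is $E_{\max}=n/2$ for even $n$ (at $S=0$) and $(n-1)/2$ for odd $n$ (at $|S|=1$). Hence $W_n=\binom{n}{2}+\lfloor n/2\rfloor=\Theta(n^2)$. The min and max eigenstates can be taken to be computational basis states, so $\ket{\psi_n}$ is well defined. Degeneracy is harmless here, because Lemma~\ref{lem:LargeSpectralWidth} only requires some eigenstates achieving the extremes.

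As an optional sanity check, we could compute $D_n(\psi_n)$ directly instead of only bounding it. In $\ket{\psi_n}$ each $\langle Z_iZ_j\rangle$ is the average of $1$ (from the ferromagnetic state) and the pair correlation in the max state. That correlation is close to $-1/n$, so $\mathrm{Var}(Z_iZ_j)\approx 1/2$ and $D_n=\Theta(n^2)$. Meanwhile $\mathrm{Var}(H_n)=W_n^2/4=\Theta(n^4)$. This exhibits the covariance-dominated behavior and shows the ratio decays like $\Theta(n^{-2})$. The direct computation is not needed, since the upper bound $D_n\le\|c\|_2^2$ already suffices.

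The only real step is identifying the extremal eigenvalues and confirming that $W_n$ grows quadratically while $\|c\|_2^2$ grows only quadratically, not quartically. I expect no genuine obstacle beyond the bookkeeping of $n$'s parity in $E_{\max}$. Since only the order $W_n=\Theta(n^2)$ matters, this can be sidestepped with the cruder bound $W_n\ge \binom{n}{2}$, using $E_{\max}\ge 0$.
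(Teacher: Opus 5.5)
Your proof is correct and is essentially the paper's: the same all-to-all Ising Hamiltonian $H_n=-\sum_{i<j}Z_iZ_j$, the witness state from Lemma~\ref{lem:LargeSpectralWidth}, and the bound $4\|c\|_2^2/W_n^2$ with $W_n=\Theta(n^2)$. The only difference is that the paper restricts to even $n$, where $W_n=n^2/2$ exactly, while you also cover odd $n$. One small slip in your optional aside, which does not affect the proof: with computational-basis extremal states each $\mathrm{Var}(Z_iZ_j)$ is exactly $0$ or $1$, giving $D_n=n^2/4$ for even $n$, so ``$\approx 1/2$'' holds only as an average over pairs, and a uniform pair correlation near $-1/n$ would instead give $\mathrm{Var}(Z_iZ_j)\approx 3/4$; your conclusion $D_n=\Theta(n^2)$ still stands.
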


\begin{proof}
For even $n$, take
\[
H_n \coloneq -\sum_{1\le i<j\le n} Z_i Z_j.
\]
There are $M_n=\binom{n}{2}$ terms and $c_i^{(n)}=-1$, so $\|c^{(n)}\|_2^2=\sum_i (c_i^{(n)})^2=\binom{n}{2}$.
The all-aligned basis states satisfy $Z_iZ_j=+1$ for all $i<j$, hence $E_{\min}(n)=-\binom{n}{2}$.
If $n/2$ spins are $+1$ and $n/2$ are $-1$, then $\sum_{i<j} z_i z_j=-n/2$, so $E_{\max}(n)=n/2$.
Thus
\[
W_n = E_{\max}(n)-E_{\min}(n)=\frac{n^2}{2}.
\]
Let $\ket{\psi_n}$ be the witness state $\ket{\psi_\star}$ from Lemma~\ref{lem:LargeSpectralWidth} for $H_n$. Then
\[
\frac{D_n(\psi_n)}{\mathrm{Var}(E_n)\big|_{\psi_n}}
\le
\frac{4\|c^{(n)}\|_2^2}{W_n^2}
=
\frac{4\binom{n}{2}}{(n^2/2)^2}
=
\frac{8(n-1)}{n^3}
\longrightarrow 0.
\]
\end{proof}

The preceding results are stated at the level of operator variance, but they directly inform the behavior of the heuristic estimator introduced earlier. Recall that the heuristic estimator is derived under a model in which covariance terms are neglected, and its variance is governed entirely by the diagonal contribution
\[
D(\psi) = \sum_i c_i^2 \,\mathrm{Var}(P_i).
\]

The proposition shows that, for Hamiltonians with sufficiently large spectral width, there exist states for which this diagonal contribution becomes negligible compared to the total variance. In such regimes, the heuristic estimator is no longer aligned with the true variance structure of the Hamiltonian, as it ignores the dominant covariance contribution. Consequently, even if grouping and allocation are otherwise well-chosen, the estimator may perform poorly relative to covariance-aware methods in this regime.  

The bound in Lemma~\ref{lem:LargeSpectralWidth} suggests that the ratio ${\|c\|_2^2} / {W^2}$
serves as a practical diagnostic for identifying such regimes. When this ratio is small, covariance effects may dominate, and covariance-agnostic methods may perform poorly. 

\section{Failure modes of (overlapped) grouping} \label{appendix:failure-modes-of-grouping}

Here we show minimal examples where variance increases under (overlapped) grouping with the heuristic estimator under fixed shot counts.

Consider two commuting observables $A$ and $B$ measured either separately as
$\{\{A\},\{B\}\}$ or jointly as $\{\{A,B\}\}$ under a fixed total measurement budget $M$.
While joint measurement increases data reuse, it also introduces a covariance contribution
to the energy estimator.  Under equal allocation $M_A=M_B=M/2$, the joint measurement
yields a larger estimator variance whenever
\begin{equation}
\label{ineq:grouping-symmetric}
2\,\sigma_{AB} \;>\; \frac{c_A}{c_B} + \frac{c_B}{c_A},
\end{equation}
where $\sigma_{AB}=\langle AB\rangle-\langle A\rangle\langle B\rangle$.
This effect arises purely from the covariance structure induced by grouping and does not rely
on repacking.  Repacking inherits this estimator-level limitation, but does not introduce it.

Indeed, consider 
Let $H=c_A A+c_B B+c_C C$ with $[A,B]=[A,C]=0$.  Consider the grouping
\[
\mathcal G=\bigl\{\{A,C\},\{B\}\bigr\},
\]
measured with $M_1$ shots on $\{A,C\}$ and $M_2$ shots on $\{B\}$.  Repacking $A$ into the second group gives
\[
\mathcal R=\bigl\{\{A,C\},\{A,B\}\bigr\},
\]
measured with the same shot counts $(M_1,M_2)$.

Recall from Sec.~\ref{sec:definitions} that if an observable $P$ is measured $M$ times, its shot-noise variance is
$\sigma^2_P/M$ with $\sigma_P:=1-\langle P\rangle^2$, and if $P$ and $Q$ are measured jointly $M$ times, their shot-noise
covariance is $\sigma_{PQ}/M$ with $\sigma_{PQ}:=\langle PQ\rangle-\langle P\rangle\langle Q\rangle$. Denote the estimator for the energy before and after repacking as $\overline{E}_{\mathcal{G}}$ and $\ovl{E}_{\Rcal}$ respectively. Using the shot-weighted averaging scheme for the weights, a simple calculation then yields the following expressions:
\begin{equation*}
\Vr{\overline{E}_{\mathcal{G}}} = \frac{1}{M_1} \left(c_A^2 \sigma^2_A + c_C^2 \sigma^2_C + 2 c_A c_C \sigma_{A C} \right) + \frac{c_B^2 \sigma^2_B}{M_2},
\end{equation*}
and
\begin{equation*}
\begin{split}
\Vr{\overline{E}_{\mathcal{R}}} &= \frac{c_A^2 \sigma^2_A}{M_1 + M_2} + \frac{c_C^2 \sigma^2_C }{M_1} + \frac{2 c_A c_C \sigma_{A C}}{M_1 + M_2} \\
&+ \frac{c_B^2 \sigma^2_B}{M_2} + \frac{2 c_A c_B \sigma_{A B}}{M_1 + M_2}.
\end{split}
\end{equation*}
Subtracting yields the following for $\text{Var}(\overline E_{\mathcal R})-\text{Var}(\overline E_{\mathcal G})$:
\begin{equation*}
\begin{split}
\frac{2 c_A c_B \sigma_{AB}}{M_1 + M_2} - \frac{M_2 \left( c_A^2 \sigma^2_A + 2 c_A c_C \sigma_{AC} \right)}{M_1 (M_1 + M_2)}.
\end{split}
\end{equation*}
Therefore $\text{Var}(\overline E_{\mathcal R})>\text{Var}(\overline E_{\mathcal G})$ if and only if
\begin{equation*}
\label{eq:increase-variance-example}
2 c_A c_B \sigma_{AB} - \frac{M_2}{M_1} \left( c_A^2 \sigma^2_A + 2 c_A c_C \sigma_{AC} \right) > 0,
\end{equation*}
or equivalently,
\begin{equation}
\label{eq:increase-variance-example-1}
2 c_A c_B \sigma_{AB} > \frac{M_2}{M_1} \left( c_A^2 \sigma^2_A + 2 c_A c_C \sigma_{AC} \right).
\end{equation}
For fixed $A, B, C$ and initial state $\ket{\psi}$, the quantities $\sigma^2_A, \sigma_{AB}, \sigma_{AC}$ are fixed. Thus it is easy to find situations when Eq.~\eqref{eq:increase-variance-example-1} is satisfied. In fact, one can make the left hand side arbitrarily large by letting $c_B \rightarrow \infty$, fixing everything else, and assuming $c_A, \sigma_{AB} > 0$.

To get a more easily understandable implication of the above proposition, we can simplify Eq.~\eqref{eq:increase-variance-example-1} under the assumptions $\sigma^2_A = \sigma_{AB} = 1$, and $\sigma_{AC}=0$, which yields
\begin{equation}
\label{eq:covinc}
    \frac{2 M_1}{M_2} > \frac{c_A}{c_B}.
\end{equation}
From Eq.~\eqref{eq:covinc}, a variance increase under the heuristic estimator under repacking is inherently asymmetric in the two operators involved.  The asymmetry reflects the directional nature of repacking: one operator is inserted into an existing measurement context associated with another, and the resulting estimator normalization and shot allocation treat the two terms differently.  As a result, unfavorable covariance effects arise only when covariance contribution, amplified by relative undersampling, overwhelm the coefficient imbalance between the operators.  Several features of the sorted insertion and ad-hoc repacking strategies mitigate these scenarios in practice.  First, both heuristics prioritize inserting operators with larger coefficients, so typically $c_A > c_B$.  Second, optimal allocation strategies based on worst-case bounds apply more shots to groups with large coefficient mass, so typically $M_1>M_2$.  Finally, across Hamiltonians we have studied, covariance contributions arising from repacking involve a variety of operators pairs which induce covariances of different sign and magnitude, so individual covariance terms may increase or decrease the overall estimator variance, but their combined contribution does not typically overwhelm the leading order variance reduction obtained from increased effective sampling.

Note that the previous example shows an interesting feature.  It is possible to find situations where Eq.~\eqref{eq:increase-variance-example-1} is never satisfied: for example, suppose that $\sigma_{AB}, \sigma_{AC} > 0$, $c_A, c_C >0$, and $c_B < 0$, in which case the right hand side is positive and the left hand side is negative. In this case, the shot-weighted averaging estimator will always reduce the variance after repacking in the example discussed there.

\end{document}